\documentclass{article}
\usepackage{graphicx} 

\usepackage{PRIMEarxiv}

\usepackage[utf8]{inputenc} 
\usepackage[T1]{fontenc}    
\usepackage{url}            
\usepackage{booktabs}       
\usepackage{amsfonts}       
\usepackage{nicefrac}       
\usepackage{microtype}      
\usepackage{lipsum}
\usepackage{fancyhdr}       
\usepackage{graphicx}       
\graphicspath{{media/}}     
\usepackage{amsmath}
\usepackage{algorithm}
\usepackage{algorithmicx}
\usepackage{algpseudocode}
\usepackage{multirow}
\usepackage{todonotes}
\usepackage{comment}
\usepackage{caption}
\usepackage{subcaption}
\usepackage{cite}
\usepackage{amsthm}
\newtheorem{theorem}{Theorem}

\newtheorem{definition}{Definition}

\usepackage{enumitem}
\setlist[enumerate]{itemsep=0.1\normalbaselineskip}
\usepackage{hyperref}       

\newcounter{subdefinition}[definition]
\renewcommand{\thesubdefinition}{\thedefinition.\arabic{subdefinition}}

\newcommand{\VASI}{verifiable autonomous system information exchanges}
\newcommand{\papertitle}{Hermes’ Seal: Zero-Knowledge Assurance for Autonomous Vehicle Communications}

\title{
    \papertitle
}

\author{
Munawar Hasan$^{1,2}$,
Apostol Vassilev$^{1}$,
Edward Griffor$^{1}$,
Thoshitha Gamage$^{3}$ \\
\\
$^{1}$National Institute of Standards and Technology, USA \\
$^{2}$Michigan Technological University, USA \\
$^{3}$Southern Illinois University Edwardsville, USA \\
\\
\texttt{\{munawar.hasan, apostol.vassilev, edward.griffor\}@nist.gov} \\
\texttt{munawarh@mtu.edu, tgamage@siue.edu}
}
\date{June 2025}

\begin{document}

\maketitle

\begin{abstract}
The application of zero-knowledge proofs (ZKPs) in autonomous systems is an emerging area of research, motivated by the growing need for regulatory compliance, transparent auditing, and trustworthy operation in decentralized environments. zk-SNARK is a powerful cryptographic tool that allows a party (the prover) to prove to another party (the verifier) that a statement about its own internal state is true, without revealing sensitive or proprietary data about that state. This paper proposes Hermes’ Seal: a zk-SNARK-based ZKP framework for enabling privacy-preserving, verifiable communication in vehicle-to-vehicle (V2V) and vehicle-to-infrastructure (V2I) networks. The framework allows autonomous systems to generate cryptographic proofs of perception and decision-related computations without revealing proprietary models, sensor data, or internal system states, thereby supporting interoperability across heterogeneous autonomous systems.

We present two real-world case studies implemented and empirically evaluated within our framework, demonstrating a step toward \textbf{\VASI}. The first demonstrates real-time proof generation and verification, achieving $8$ ms proof generation and $1$ ms verification on a GPU, while the second evaluates the performance of an autonomous vehicle perception stack, enabling proof of computation without exposing proprietary or confidential data. Furthermore, the framework can be integrated into AV perception stacks to facilitate verifiable interoperability and privacy-preserving cooperative perception. The demonstration code for this project is open source, available on Github\footnote{\url{https://github.com/mhasan08/zk-AV}}.
\end{abstract}

\keywords{Autonomous vehicles \and zk-SNARKs \and Zero knowledge proofs \and Cooperative perception}

\section{Introduction}
\label{sec:introduction}
Autonomous Vehicles (AVs) have emerged as a transformative technology in recent years, promising to revolutionize transportation by improving road safety, reducing human error, and increasing efficiency in urban and rural mobility. The perception stack serves as the backbone of AVs, providing the foundation for situational awareness and downstream decision-making. This stack utilizes artificial intelligence (AI) models to process raw sensor data and extract meaningful structured information about the driving environment. AV manufacturers select modalities --- such as cameras, LiDAR, radar, or a combination thereof -- to achieve this goal. For instance, Tesla~\cite{tesla} uses a camera-centric approach, while Waymo~\cite{waymo} relies on a fusion of cameras, LiDAR, and high-definition (HD) maps. 

Whereas cameras offer rich semantic information, LiDAR provides superior performance in low-light conditions. Ultimately, the selection of a sensor suite represents a strategic trade-off between cost, computational complexity, and operational redundancy. Regardless of the chosen sensor modalities, standalone vehicle perception remains limited by physical constraints such as occlusions and limited line-of-sight in complex traffic scenarios -- a vulnerability shared by human drivers. Consequently, hidden obstacles and unusual traffic patterns can result in late detections that pose challenges to the vehicle's decision-making system, leading to potentially hazardous outcomes. This is why a vehicle, whether autonomous or human-driven, can benefit from  information from other roadway participants.  

\textbf{Cooperative perception}~\cite{coop-perception, liu2025collaborative, caillot2022survey, kim2015impact} enables multiple agents to enhance their situational awareness by sharing perception information about the driving environment, such as detected objects, road conditions, or safety-relevant events. In autonomous vehicle deployments, this information is often disseminated using a broadcast communication setting where an \textit{ego} vehicle transmits perception packets to any nearby recipients, such as other vehicles, roadside units (RSUs), infrastructure nodes, or even third-party auditors, without knowing their identities, affiliations, or establishing any prior relationships. Through such broadcasts, vehicles can extend their effective sensing horizon beyond line-of-sight limitations, thereby improving the detection of occluded objects, safety-critical events, and dynamic hazards.

To illustrate this, consider an \textit{ego} vehicle approaching an urban intersection surrounded by buildings on all four corners (see Figure~\ref{fig:traffic-pattern}). The \textit{ego} vehicle may be unable to detect a bicyclist approaching from a cross street due to occlusion by a building, whereas another nearby vehicle may observe the bicyclist and broadcast this information. Importantly, this transmission is intentionally  receiver-agnostic: the sender neither establishes pairwise relationships nor negotiates credentials with individual recipients. Perception information is made available to any entity within communication range, reflecting the highly dynamic and heterogeneous nature of vehicular environments.

However, this receiver-agnostic setup introduces a fundamental challenge for any receiving vehicle: how can externally received perception data be validated for correctness and policy compliance before being incorporated into safety-critical decision-making? Without a mechanism for independent verification, broadcast information may be incomplete, inconsistent, or adversarially manipulated.

To ensure safety and interoperability in this broadcast communication setting, a regulatory body like NHTSA (National Highway Traffic Safety Administration)~\cite{nhtsa} or a standards body like SAE (Society of Automotive Engineers)~\cite{sae} or a consumer safety organization like AAA (American Automobile Association)~\cite{aaa} in collaboration with the insurance industry defines a publicly specified policy that formalizes the minimum safety and performance constraints required for compliant perception reporting (e.g., confidence thresholds, geometric consistency, safety-critical object inclusion, etc.). Automotive manufacturers may voluntarily adopt this standard and configure their systems to generate perception outputs consistent with these constraints.

\begin{figure}
    \centering
    \includegraphics[width=0.82\linewidth]{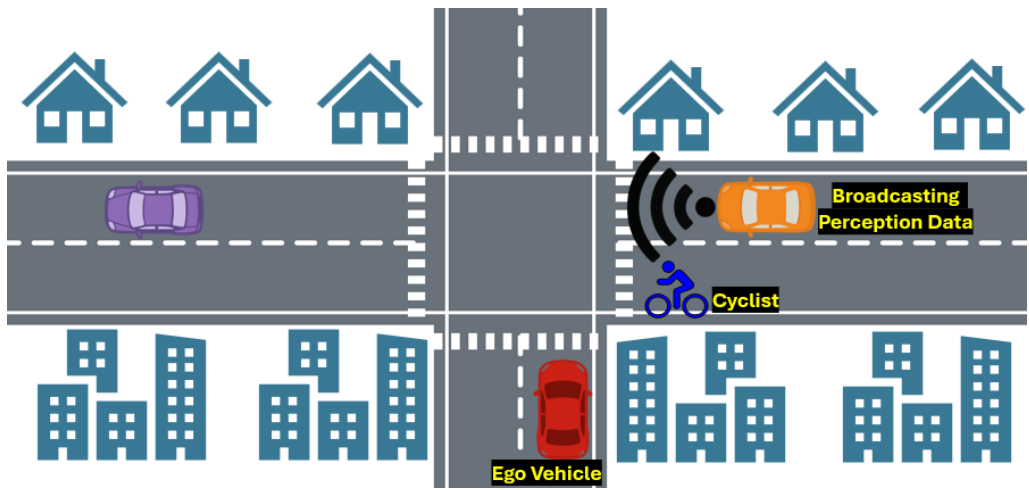}
    \caption{City traffic pattern: The perception stack of the \textit{Ego} vehicle (red car) is not enough to assess the traffic situation fully, since the cyclist is completely occluded by the city buildings. Hence, the \textit{Ego} vehicle needs perception data from another vehicle (orange car) to fully comprehend the situation and make an informed decision.}
    \label{fig:traffic-pattern}
\end{figure}

Under such a communication model, concerns about data privacy and integrity make it undesirable to disclose raw sensor data or proprietary data produced by AI models. Instead, a zero-knowledge framework can be employed whereby the ego vehicle attaches a cryptographic proof to each broadcast packet, certifying that the transmitted information satisfies the formally specified public policy. Consequently, any recipient, whether previously known or entirely unknown, can verify compliance without accessing sensitive internal details. Hence, trust is shifted from the identities of communicating participants to the mathematical validity of the proof of conformance to the public specification, thus enabling scalable, privacy-preserving cooperative perception in heterogeneous vehicular networks. Further, autonomous systems operate under extreme low-latency requirements. Accordingly, our approach is guided by two design objectives: (i) minimizing the number of messages exchanged, and (ii) minimizing the size of each transmitted message.

\textbf{Contributions of this paper:} \textbf{1.} We introduce Hermes' Seal\footnote{Hermes: the messenger of the Greek gods and a figure associated with hermeneutics—the art of interpretation—symbolizes trusted message transmission across boundaries, reflecting our goal of verifiable yet privacy-preserving AV communication.},  a framework for \textbf{\VASI} which enables the exchange of perception data with cryptographically verifiable correctness while concealing proprietary model information and raw sensory inputs.  \textbf{2.} We show how this framework can be deployed and used to solve two important open problems in automotive transportation: \textbf{2.a)} we address a fundamental limitation of cooperative perception, namely the lack of verifiable guarantees on externally received safety-critical data. As autonomous fleets continue to scale and interact in increasingly dense and heterogeneous environments, the need for mechanisms that simultaneously ensure verifiability, privacy, and interoperability becomes both timely and critical~\cite{remeli2019towards, jackson2021certified, gao2023online}. \textbf{2.b)} we develop a ZKP-based mechanism for privacy-preserving verification of vehicle compliance by authorities. 

The rest of this paper is organized as follows. Section \ref{sec:related-work} provides a brief primer on ZKPs and an overview of relevant literature, highlighting the computational constraints imposed by autonomous vehicles when applying ZKP techniques. Section \ref{sec:preliminaries} introduces the mathematical preliminaries and notation. Section \ref{sec:methodology} describes the formal specifications and the internal architecture of Hermes' Seal. Section~\ref{sec:security-proof} provides security analysis of the framework. Section \ref{sec:sub:application} presents two case studies of the framework. 
Section~\ref{sec:experiments} presents experimental results and analysis. Section~\ref{sec:limitation} and~\ref{sec:future-work} discuss current limitations and future work, respectively. Finally, Section~\ref{sec:conclusion} gives concluding remarks.

\section{A Primer on Zero-Knowledge Proofs}
\label{sec:related-work}

A zero-knowledge proof (ZKP) is a cryptographic protocol that enables a party, known as the \textit{prover}, to prove the validity of a specific statement --- often called the \textit{instance} --- to a \textit{verifier} without revealing the secret information -- known as the \textit{witness} --- that attests its truth. This witness could be a password, a transaction key, or proprietary machine learning model parameters that satisfy specific public constraints. The concept was introduced by Goldwasser et al.~\cite{goldwasser2019knowledge}, who defined the foundational model of interactive zero-knowledge proofs.

Fundamentally, a ZKP system must satisfy three properties:
\begin{enumerate}
    \item \textbf{Completeness}: If the statement is \texttt{true} and the prover possesses a valid witness, the protocol guarantees that an honest verifier will be convinced of the statement's validity.
    \item \textbf{Soundness}: If the statement is \texttt{false}, no dishonest prover can convince the verifier otherwise, except with negligible probability.
    \item \textbf{Zero-knowledge}: The proof reveals nothing beyond the validity of the statement; i.e., the verifier gains no additional information regarding the witness.
\end{enumerate}

Over the last decade, ZKPs have gained significant traction due to their utility in blockchain-based applications~\cite{ben2014succinct, sasson2014zerocash, groth2016size, zhou2024leveraging}. While ZKP initially sought for distributed computing and privacy-preserving transactions, interest has recently expanded into AI and machine learning (ML)~\cite{weng2021mystique, peng2025survey, aziz2024zkvml, hao2024scalable, wang2024efficient}. 

AI models deployed in commercial autonomous vehicles are resource-intensive, demanding substantial computational power and memory while operating under stringent latency constraints. Consequently, generating a ZKP for the internal state of an entire model is computationally prohibitive. For instance, zkCNN~\cite{liu2021zkcnn} demonstrates the formulation of convolution neural network (CNN) within a zero-knowledge framework; however, the authors report that generating a proof for the VGG16 model (configuration D)~\cite{simonyan2014very} requires $20$-$30$ seconds (s) and produces a proof of size $200$-$300$ kilobytes (KB). Similarly, Zkml~\cite{chen2024zkml}, a framework for ML inference, takes $52.9$ s to generate a $12$ KB proof. Such latency is unacceptable for real-time AV applications. Therefore, a framework for AVs must prioritize efficiency without sacrificing security. Additionally, in cooperative perception settings, messages are exchanged asynchronously over communication channels that may not support multi-round protocols. This raises an important design question: should the system rely on interactive or non-interactive proofs?

\subsection{Interactive vs Non-Interactive Proof}
\label{sec:sub:i-vs-non-i}
Zero-knowledge proofs are broadly classified into interactive and non-interactive categories based on the communication pattern between the prover and verifier. In interactive ZKPs, the verifier issues challenges to which the prover must respond consistently to validate the statement. While powerful, this requires a persistent communication channel and synchronized interaction.

In contrast, Non-Interactive Zero-Knowledge proofs (NIZKPs) eliminate the need for back-and-forth messaging. The prover generates a single, standalone proof that can be independently verified. NIZKPs, particularly in the form of zk-SNARKs (Zero-Knowledge Succinct Non-interactive Arguments of Knowledge), are uniquely suited for decentralized, asynchronous systems like AVs where low latency and broadcast capability are critical. For example, in a cooperative perception scenario, an \textit{ego} vehicle often broadcasts data to multiple surrounding vehicles. Maintaining persistent, interactive sessions with every receiving vehicle is infeasible. Conversely, zk-SNARKs align with this broadcast paradigm, allowing each receiving vehicle to independently verify the sender's proof. In addition to this, zk-SNARKs provide succinctness, which ensures that the proof size and verification time are significantly smaller than the cost of executing the original computation. Given these advantages, we focus exclusively on non-interactive zk-SNARKs for the remainder of this work.

\subsection{Computational Constraints in AVs} \label{sec:sub:motivation}

While Zero-knowledge proofs have found extensive applicability in blockchain and privacy-preserving identity systems~\cite{zcash, miers2013zerocoin, naik2021sovrin, ben2018scalable, zkevm}, their integration into high-performance ML remains an open challenge. Existing literature on verifiable ML can be broadly categorized into communication-intensive interactive protocols and latency-sensitive decentralized frameworks. Neither paradigm effectively addresses the unique constraints of AVs: the former imposes prohibitive bandwidth demands on  Vehicle-to-Everything (V2X) channels, while the latter introduces infrastructure dependencies and consensus delays that are incompatible with real-time decision loops.

Early research, such as SafetyNets~\cite{ghodsi2017safetynets}, utilized the GKR protocol~\cite{goldwasser2015delegating}, to provide interactive proofs. However, GKR is limited to low-depth arithmetic circuits and lacks support for complex conditionals. To circumvent this, SafetyNets necessitates the use of quadratic activation functions. This architectural constraint severely limits a neural network's ability to generalize due to the U-shaped curve of the activation function and can lead to unstable gradients during training.

Subsequent approaches like \texttt{Mystique}~\cite{weng2021mystique} target standard neural network inference but require interactive protocols unsuitable for AVs. As an example, for a ResNet50~\cite{he2016deep} model, \texttt{Mystique} requires a communication overhead exceeding $1.5 \text{ GB}$. As discussed before, such bandwidth demands are prohibitive for vehicle-to-everything (V2X) communication channels. Similarly, frameworks like vCNN~\cite{lee2020vcnn} allow for verifiable Convolutional Neural Networks (CNNs) but incur extreme latency, taking up to 8 h for proof generation and 19 s for verification, rendering them unusable for real-time safety systems. While more recent zero-knowledge proof systems can generate succinct proofs (often on the order of a few KBs) with prover complexity linear in the circuit size, these systems are primarily designed to support general-purpose computation or full-model verification. Hence, their applicability to real-time AV perception, which imposes stringent latency, bandwidth, and semantic constraints, remains an open challenge.

A parallel body of research explores blockchain-based verifiable computation~\cite{aziz2024zkvml, li2023aggregated, jain2021blockchain}. While these systems provide a decentralized trust root, they introduce significant infrastructure dependency. Blockchain-based solutions generally require distributed consensus mechanisms to validate state changes, introducing variable latency that is incompatible with the millisecond-level decision loops of AVs. Furthermore, the requirement for AVs to maintain continuous synchronization with a distributed ledger introduces unnecessary communication overhead. Unlike these approaches, our framework focuses on lightweight, point-to-point verifiability that functions independently of blockchain consensus layers.

\subsection{The zk-SNARK Protocol} \label{sec:sub:protocol}
\label{sec:sub:zksnark}
zk-SNARKs stands for \textbf{z}ero-\textbf{k}nowledge \textbf{S}uccinct \textbf{N}on-\textbf{I}nteractive \textbf{A}rguments of \textbf{K}nowledge. This is the core cryptograpic primitive used in the proposed Hermes' Seal framework. The term ``succinct" here implies that the proof size and verification time are independent or only polylogarithmically dependent on the size of
the underlying computation. Specifically, we employ the Groth16 protocol~\cite{groth2016size}, which is widely adopted in blockchain technologies~\cite{zcash} due to its efficiency. Groth16 produces proofs of constant size ($\approx 128$ bytes) and enables constant-time verification. In scope of this work, Groth16, however, comes at the cost of needing a trusted third party.

PLONK~\cite{plonk}, Halo2~\cite{halo2} and STARKs are few alternatives to Groth16 that avoid or eliminate a trusted setup requirement, but this comes at practical trade-off noteworthy for time-critical AV decision making.  Proofs in these systems are typically larger as compared to Groth16.
The resulting proof sizes and verification costs are generally less uniform and more dependent on circuit structure and implementation choices. Further, STARKs has higher communication overhead. This variability can pose challenges for time-critical AV deployments, especially in settings involving bandwidth-constrained V2X links, roadside verifiers with strict latency budgets, or resource- and thermal-constrained embedded platforms.

In contrast, Groth16 offers very small proofs and extremely fast deterministic with constant-time verification with constant-size inputs, making it well-suited for bandwidth-limited vehicle-to-infrastructure communication and on-vehicle real-time decision support, despite its reliance on a trusted setup. 

Conceptually, implementing a Groth16-based zk-SNARK involves a transformation pipeline. The computation (e.g., a neural network inference) is first expressed as a high-level \textbf{arithmetic circuit}. This circuit is transformed into a Rank-1 Constraint System (R1CS), a system of equations defining the computation's logic. Finally, the R1CS is converted into a Quadratic Arithmetic Program (QAP), a polynomial form that enables the generation of efficient proofs where the resulting proof has  three group elements. The mathematics behind Groth16 is deferred to Appendix \ref{app:groth16}.

\section{Preliminaries and Notation} \label{sec:preliminaries}

Before detailing the proposed framework, we formally define the notations and data representations used throughout this paper. Table~\ref{tab:abbreviations} summarize the abbreviations and mathematical notations used in this paper.

\begin{table}[H]
    \centering
    \caption{Symbols and Mathematical Notations}
    \label{tab:abbreviations}
    \begin{tabular}{|l|p{10cm}|}
        \hline
        \textbf{Symbol / Abbreviation} & \textbf{Description} \\
        \hline
        EA & Enforcing authority\\
        \hline
        $\mathcal{P}$ & Prover (e.g., the Ego Vehicle)\\
        \hline
        $\mathcal{V}$ & Verifier (e.g., Roadside Unit (RSU) or Auditor)\\
        \hline
        $\mathcal{L}$ & Set of constraints imposed by the EA \\
        \hline
        $\mathcal{X}$ & Set of public inputs (\textit{Instance}) \\
        \hline
        $\mathcal{W}$ & Set of private inputs (\textit{Witness}) \\
        \hline
        $\mathcal{D}$ & Complete input set where $\mathcal{D} \leftarrow (\mathcal{W} \cup \mathcal{X})$ \\
        \hline
        $\Delta$ & Four-byte Domain Separator\\
        \hline
        \texttt{Circuit} & Circuit that implements all the constraints from $\mathcal{L}$ using $\mathcal{X}$ and $\mathcal{W}$ \\
        \hline 
        R1CS & Rank-1 Constraint System \\
        \hline
        $\nu$ & Cryptographic Nonce \\
        \hline
        $T$ & Timestamp (Unix epoch in UTC) \\
        \hline
        $s_{sec}$ & Secret parameter used by $\mathcal{P}$ for commitment $c$, where $s_{sec} \in \mathbb{F}_p$ \\
        \hline
        $pk, vk$ & Proving key and Verification key \\
        \hline
        $\pi$ & Zero-knowledge proof / zk-SNARK proof \\
        \hline
        $pk_{sig}, vk_{sig}$ & Digital Signature Signing/Verification keys\\
        \hline
        $Cert_{VID}$ & \textit{Ego} vehicle's certificate \\
        \hline
        $\Pi$ & Proof package \\
        \hline
        CS & Commitment scheme \\
        \hline
        $c$ & Commitment \\
        \hline
        $H(A\; ; B)$ & Hash function in the ZK-circuit (Field arithmetic), where $A, B \in \mathbb{F}_p$\\
        \hline
        $H(A \parallel B)$ & Standard Hash function (Bitwise), where $\parallel$ denotes concatenation\\
        \hline
        $\mathbb{F}_p$ & Finite field of prime order $p$\\
        \hline
        $\mathbb{N}, \mathbb{R}$ & Sets of natural and real numbers, respectively\\
        \hline
        Enc, Dec & Encoding and decoding functions\\
        \hline
    \end{tabular}
\end{table}

\subsection{Data Representation}\label{sec:sub:data-rep}

Standard AV perception stacks operate on floating-point numbers. However, zk-SNARK circuits operate over a finite field $\mathbb{F}_p$. To bridge this gap, we employ a quantization scheme.

\begin{definition}\label{def:scale}
\textbf{Scaling Function: } Let $x \in \mathbb{R}$ and $\rho \in \mathbb{N}$, then we define a scaling function $\mathcal{F}$ that takes a real number and a scaling factor $\rho$ and outputs an element of the finite field $\mathbb{F}_p$ .i.e., $\mathcal{F}: \mathbb{R} \times \mathbb{N} \mapsto \mathbb{F}_p$. Mathematically,

\begin{equation}\label{eq:scale}
    \mathcal{F}(x, \rho) = \lfloor x \cdot \rho \rfloor \pmod{p}
\end{equation}

Similarly, we define $\mathcal{F}^{-1}$ as a function that takes a scaling factor $\rho$ and an element of finite field $\mathbb{F}_p$ denoted by $z$ and outputs a real number .i.e., $\mathcal{F}^{-1}: \mathbb{F}_p \times \mathbb{N} \mapsto \mathbb{R}$. Mathematically, 

\begin{equation}\label{eq:inversescale}
    \mathcal{F}^{-1}(z, \rho) = \frac{1}{\rho}
    \begin{cases}
        z & \text{if } z\leq \frac{p-1}{2}\\
      z-p & \text{otherwise}
    \end{cases}
\end{equation}

From equation~\eqref{eq:scale} and~\eqref{eq:inversescale}, for any $x$ satisfying $|\lfloor x \rho \rfloor| < \frac{p}{2}$, we have we have $\mathcal{F}^{-1}(\mathcal{F}(x, \rho), \rho) = x$. The choice of $\rho$ controls the trade-off between numerical precision and the risk of modular overflow. Example: Let $x = 0.75$ and $\rho = 100$. Then $\mathcal{F}(x,\rho) = \lfloor 0.75 \cdot 100 \rfloor = 75 \in \mathbb{F}_p$, and $\mathcal{F}^{-1}(75,\rho) = 0.75$. Since $75 \ll p/2$, modular wraparound does not occur in practice.
\end{definition}

\begin{definition}\label{def:encoding}
\textbf{Encoding/Decoding Functions}: Let $\text{Enc()}$ be an encoding function that takes an arbitrary input (text, integer, float etc.) and its associated datatype tag
$\mathsf{dtype}$, and converts it into a canonical representation in little-endian order. Hence, we have:

\begin{equation}
    \text{Enc}_{\mathsf{dtype}}(\cdot) \mapsto \{0, 1\}^*, 
\end{equation}
such that $\mathsf{dtype}$ denotes a datatype tag which specifies a canonical serialization format, and  $\mathsf{dtype} \in \mathcal{T}$, where $$\mathcal{T} =
\{\texttt{CTX}, \texttt{R1CS}, \texttt{VK}, \texttt{CERT},
\texttt{PROOF}, \texttt{COMMIT}, \texttt{TS}, \texttt{NONCE}\}.$$

The corresponding decoding function Dec() takes a canonical representation as an input and its associated datatype tag
$\mathsf{dtype}$, and returns the output by parsing the canonical representation based on datatype tag. Hence,

\begin{equation}\label{eq:decoding}
    \text{Dec}_{\mathsf{dtype}}(\{0,1\}^*) \mapsto x.
\end{equation}
\end{definition}

\section{Hermes' Seal: A ZKP-based Framework for AVs} \label{sec:methodology}

\subsection{Design Philosophy}\label{sec:sub:philosophy}

The architecture of Hermes' Seal is governed by four core design principles:

\begin{itemize}
    \item \textbf{Model Agnosticism}: The framework operates as a cryptographic wrapper around the perception pipeline, verifying its outputs without depending on the underlying perception architecture. Whether the AV uses YOLO~\cite{redmon2016you}, DETR~\cite{carion2020end}, or PointPillars~\cite{lang2019pointpillars}, the output is standardized into proof statements. This abstraction enables interoperability across heterogeneous fleets without enforcing uniformity at the model or data modality level.
    
    \item \textbf{Modular Integration}: Designed for post-inference integration, the framework functions as a distinct module in the AV stack. It requires no retraining of existing models and ensures that real-time inference performance is preserved, adding only a marginal overhead for proof generation. 
    
    \item \textbf{Privacy-Preserving Collaboration}: By verifying assertions rather than sharing raw data, the system facilitates V2X collaboration without exposing proprietary model weights or sensitive sensor streams. This ensures accountability while protecting intellectual property and complying with privacy regulations.
    
    \item \textbf{Extensibility}: While this work focuses on perception, the framework is scalable to other components of the autonomy stack, such as planning and control. By formulating application-specific constraints within the proving circuit, the same zero-knowledge mechanism can attest to safety compliance, ethical decision logic, or policy adherence across diverse verification needs.
\end{itemize}

\subsection{System Model and Entities}\label{sec:sub:system-model-entities}

To formalize the framework, we categorize the data involved into two types:

\begin{enumerate}
\item \textbf{Public Data ($\mathcal{X}$):} Also referred to as the \textit{Instance}, this includes non-sensitive information visible to all parties and used for verification. Examples include the claim being proven (e.g., coordinates of a detected obstacle or distance from an obstacle) and the cryptographic hash of the model architecture.
\item \textbf{Private Data ($\mathcal{W}$):} Also referred to as the \textit{Witness}, this includes proprietary or confidential information known only to the prover such as raw sensor data, model weights, internal neural network activations.
\end{enumerate}

Furthermore, we define three key entities governing the proof generation and verification process:
 
\begin{enumerate}
    \item \textbf{Prover ($\mathcal{P}$):} An entity that performs a computation on private or proprietary data and generates a cryptographic proof attesting to the correctness of the output. In this framework, the AV (or its onboard perception system) acts as the prover, ensuring its perception stack (e.g., object detection and classification) has executed the model correctly without modification.
    \item \textbf{Verifier ($\mathcal{V}$):} An entity that receives the proof and efficiently validates it without re-executing the underlying computation. A verifier can be an external party such as another vehicle (in V2X scenarios), roadside infrastructure, or a regulatory auditor ensuring compliance with safety standards.
    \item \textbf{Enforcing Authority} (EA): The entity responsible for defining the formal constraints and logical conditions that the prover $\mathcal{P}$ must satisfy. In an open, receiver-agnostic communication model, the EA provides the publicly specified policy against which transmitted perception packets are \textit{cryptographically verified}, i.e., with regard to a ZKP. It specifies application-level safety and performance requirements—such as minimum detection confidence thresholds, geometric consistency conditions, or mandatory inclusion of safety-critical objects—under which perception outputs are considered compliant.
    
    There are two possible trust configurations for the EA. In a \textit{trusted} configuration, the EA is assumed to act honestly and provide correct constraint specifications, as would be expected from a regulatory body (e.g., NHTSA)~\cite{nhtsa}), standards body (e.g., SAE~\cite{sae}) or a consumer safety organization (e.g., AAA~\cite{aaa}) or even an insurance provider seeking auditable safety guarantees. In a \textit{transparent} configuration, the EA’s role is limited to defining constraints, which are publicly verifiable, and the EA does not need to be trusted. In this work, we focus on the trusted EA model. The Hermes’ Seal framework is agnostic to the specific proof system employed. Consequently, extending the framework to support transparent proof systems or adversarial EAs remains an important direction for future work.
\end{enumerate}

\begin{figure}[t]
\caption{Sequence diagram of Hermes' Seal.}
\label{fig:sequencediagram}
\begin{center}
    \includegraphics[width=0.75\textwidth]{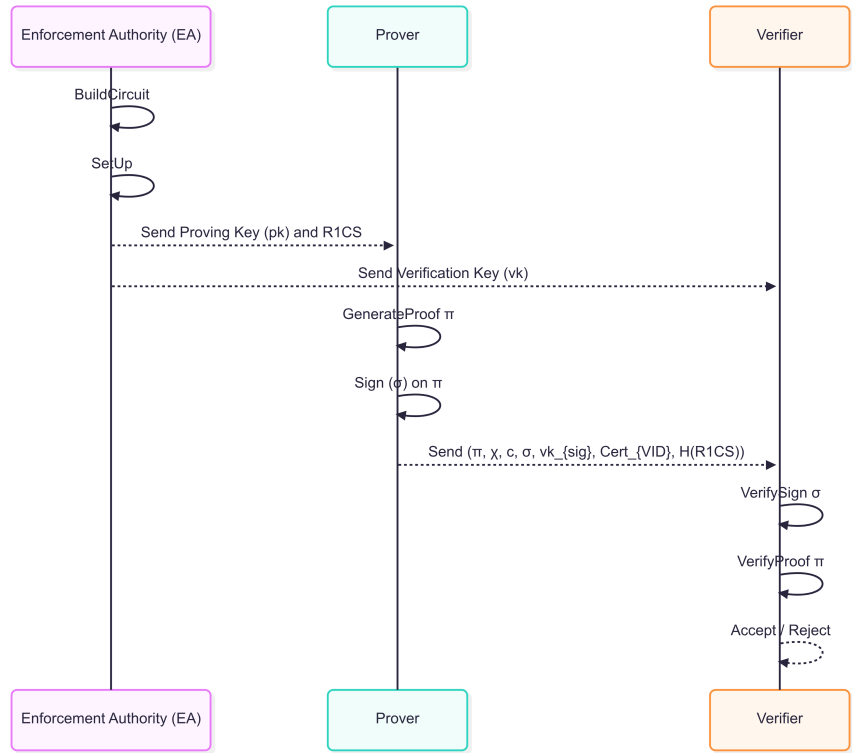}
\end{center}
\end{figure}

\subsection{Entity's Roles and Responsibilities}\label{sec:sub:entity-responsibility}

Figure~\ref{fig:sequencediagram} depicts the sequence diagram of Hermes' Seal, which illustrates the interactions between the three entities. The following descriptions detail their formal roles and responsibilities. It is important to note that, since ZKPs are tightly coupled with the applications in which they are deployed, the implementation details of the \texttt{Circuit} are presented here in abstract form. The concrete realization \texttt{Circuit} is discussed later in Section~\ref{sec:sub:application}, where two specific applications are examined in detail.

\begin{enumerate}
    \item Enforcement Authority (EA): EA serves as an anchor of trust and must be an organization possessing regulatory powers (e.g., NHTSA~\cite{nhtsa}). The EA has the following specific tasks and responsibilities.
    \begin{enumerate}
        \item \textit{Constraint formulation}: The EA defines the set of logical constraints or rules, denoted by $\mathcal{L}$, that the prover $\mathcal{P}$ must satisfy while generating the proof. For example, a constraint may specify that an AV can initiate a braking action only if the probability of detecting a relevant object exceeds $75\text{ \%}$.
        \item \textit{ZKP} \texttt{Circuit} \textit{Creation}: The EA uses $\mathcal{L}$ to design \texttt{Circuit}, which contains placeholders for both public inputs ($\mathcal{X}$) and private witnesses ($\mathcal{W}$), as well as commitment calculations ($c$). For instance, the actual probability of object detection is treated as a private input, while the object label is public. Parameters such as the scaling factor $\rho$ (refer to Definition~\ref{def:scale}) are also hard-coded into the \texttt{Circuit}. This work uses Circom~\cite{circom}, a domain-specific language and compiler for constructing arithmetic circuit. 
        \item \texttt{Circuit} \textit{Initialization}: The EA compiles the \texttt{Circuit} using Circom to generate a binary file known as R1CS (system of equations). Using the R1CS and a universal setup file called Powers of Tau (\texttt{ptau})\footnote{Powers of Tau: A universal parameter setup used in zk-SNARK systems to initialize public cryptographic parameters independent of any specific circuit, allowing the same setup to be reused across multiple circuits.}, the EA produces the proving key ($pk$) and the verification key ($vk$). $pk$ is distributed to authorized provers to enable proof generation, while $vk$ is made publicly available to all verifiers for proof validation.
        \item \textit{Auditing}: In post-incident scenarios or routine checks, the EA possesses the authority to demand the opening of specific commitments. The EA runs the Verify Algorithm (see Definition \ref{def:commitment-scheme}) to verify that the private inputs $m$ and randomness $s_{sec}$ match the committed values on the logs.
    \end{enumerate}
    \item \textbf{Prover ($\mathcal{P}$):} The Prover represents the autonomous vehicle (AV) or perception system responsible for generating a zero-knowledge proof ($\pi$). The prover executes two primary tasks:
    \begin{enumerate}
        \item \textit{Witness Generation:} The prover computes the witness values satisfying the circuit constraints. This involves combining the private data (e.g., scaled probability $\{75\} \in \mathcal{W}$) and public data (e.g., object label $\{\text{label}\} \in \mathcal{X}$) into a witness file formatted for the R1CS.
        \item \textit{Proof Generation \& Signing:} $\mathcal{P}$ uses the proving key $pk$ and the generated witness to compute the zk-SNARK ($\pi$)\footnote{Unless stated otherwise, we use \textit{proof} to denote the zero-knowledge proof artifact $\pi$, which serves as a proof of computation.}. Additionally, $\mathcal{P}$ signs the public inputs to ensure message authenticity.
    \end{enumerate}
    
    \item \textbf{Verifier ($\mathcal{V}$):} The verifier represents the entity (e.g., another AV or RSU) that verifies the proof correctness. The verification process is twofold: first, $\mathcal{V}$ verifies the digital signature to authenticate the source of the message (also identity of $\mathcal{V}$ is revealed). Once authenticated, $\mathcal{V}$ validates the cryptographic proof $\pi$ using the verification key $vk$ and the public data $\mathcal{X}$. Successful verification provides assurance that the underlying inference was computed correctly and satisfies the constraints of the EA, without revealing the private data of the prover.
\end{enumerate}

\subsection{Threat Model and Assumptions}\label{sec:sub:threatmodel}

To rigorously evaluate the security of the proposed framework, we define the capabilities of the adversaries and the trust assumptions placed on system entities. We analyze the security of the framework in the Random Oracle Model (ROM).

\begin{enumerate}
    \item \textbf{Adversarial Model ($\mathcal{A}$):} Let $\mathcal{A}$ be a probabilistic polynomial-time (PPT) adversary. $\mathcal{A}$ may act in two capacities:
    \begin{itemize}
        \item \textbf{Dishonest Prover:} $\mathcal{A}$ controls the vehicle and attempts to falsify perception outputs (e.g., claiming to see an obstacle when none exists) while still producing a valid proof $\pi$, violating the \textit{commitment-binding} and \textit{ZKP-soundness} properties.
        \item \textbf{External Observer:} $\mathcal{A}$ eavesdrops on the V2X communication channels and attempts to infer private perception values (e.g., raw sensor inputs) from the public transcripts and proofs,  violating the \textit{commitment-hiding} and \textit{zero-knowledge} properties.
    \end{itemize}
    
    \item \textbf{Trust Assumptions:} 
    \begin{itemize}
        \item \textbf{Enforcement Authority (EA):} is a Trusted Third Party (TTP), assumed to correctly generate circuit parameters ($pk, vk$) and securely deletes the toxic waste randomness during setup. Furthermore, the EA is trusted to perform retrospective audits by demanding the opening of commitments (via the Verify Algorithm) to verify historical data against the logs.
        \item \textbf{Sensor Integrity:} The physical integrity of the sensors (e.g., protection against GPS spoofing or sensor blinding) is outside the scope of this cryptographic framework. We assume the private input $\mathcal{W}$ correctly reflects the sensor's digital reading.
    \end{itemize}
\end{enumerate}

\subsection{Scope: Inference-Level Verification}\label{sec:sub:scope}

To ensure practical feasibility, we define the operational boundary of this framework as \textbf{Inference-Level Verification}. We employ zk-SNARKs to verify the correctness of specific inference outputs relative to defined safety constraints, rather than verifying the entirety of the autonomous software stack. This restriction addresses the core challenge of applying zk-SNARKs to AVs: balancing security with latency. Since verifying end-to-end neural network processing imposes prohibitive computational overhead, a narrower scope is essential for maintaining real-time performance in safety-critical scenarios.


This defined scope facilitates the modular integration of the framework across heterogeneous perception stacks. Consequently, each vendor can independently produce verifiable outputs without exposing their proprietary model architectures or training data. This modularity not only preserves the confidentiality of sensitive Intellectual Property but also enhances scalability by focusing strictly on the inference validity, striking a practical balance: providing cryptographic guarantees for perception integrity and V2X collaboration while maintaining the responsiveness required for safe driving.

The proposed EA is conceptually aligned with national regulators such as NHTSA. The EA acts as the trusted third party responsible for generating or certifying the proving ($pk$) and verification ($vk$) keys. Since the Groth16 protocol requires a Structured Reference String (SRS) --- the creation of which involves sensitive \textit{``secret trapdoor parameters"}\footnote{Known as ``toxic waste". See Appendix~\ref{app:groth16:srs}} --- this setup phase represents a critical point of trust. By assigning this trust responsibility to the EA, our model aligns with existing regulatory precedent~\cite{dot-connected-vehicle, etsi, underwriters-laboratories}, where authorities issue cryptographic credentials for vehicle identification, emissions compliance, and V2X communication. The use of a trusted third party for certifying proving/verification keys establishes a transparent governance mechanism, ensuring that cooperative perception is rooted in publicly verifiable, regulator-controlled trust rather than opaque vendor claims.

Furthermore, because Groth16 keys are bound to specific circuit constraints, the EA maintains a registry of valid verification keys ($vk$) corresponding to approved model architectures. This ensures that while vendors may utilize different proprietary models, every proof is validated against a key that was trusted and issued by the regulator. The mechanisms for distributing and managing these public parameters (e.g., PKI) are orthogonal to the cryptographic protocol and are outside the scope of this work.

In summary, the EA-operated setup ensures the following design guarantees:
\begin{enumerate}
    \item No vehicle manufacturer gains unilateral influence over the generation of the proving keys.
    \item During the trusted setup phase, secret trapdoor parameters generated for constructing the SRS is securely disposed of under regulatory oversight.
    \item The resulting verification keys ($vk$) can be published as an open standard for universal verification, while the proving keys ($pk$) are provisioned to manufacturers to generate valid proofs.
\end{enumerate}




\subsection{The Protocol Workflow}\label{sec:sub:workflow}

The proposed framework operates through a well-defined protocol consisting of six distinct \textbf{operations}, formally defined in \eqref{eq:createcircuit} -- \eqref{eq:verifyproof}. These operations encompass the complete lifecycle of the trust model and are organized into three sequential phases executed by the EA, the Prover, and the Verifier, respectively.

\begin{enumerate}
    \item \textbf{Operation: BuildCircuit} \\
    An abstract operation executed by the EA to translate safety policies into a computational form.
    \begin{itemize}
        \item \textbf{Input:} A set of logical constraints and rules $\mathcal{L}$ to be enforced.
        \item \textbf{Output:} A high-level arithmetic circuit, denoted by \texttt{Circuit}, which encodes the constraints in $\mathcal{L}$ as algebraic relations suitable for zk-SNARK proving.
        \item \textbf{Formulation:}
        \begin{equation}\label{eq:createcircuit}
            \texttt{Circuit} \leftarrow \text{\textbf{BuildCircuit}}(\mathcal{L})
        \end{equation}
    \end{itemize}
    Section~\ref{sec:sub:application} exemplifies concrete instantiations of the \texttt{Circuit}.
    \item \textbf{Operation: Setup} \\
    Executed by the EA, this operation initializes the cryptographic environment. It compiles the circuit and generates the cryptographic parameters required by the prover and verifier.
    \begin{itemize}
        \item \textbf{Input:} The \texttt{Circuit} and a publicly available universal parameter file called Powers of Tau (\texttt{ptau}).
        \item \textbf{Output:} A binary file encoding the Rank-1 Constraint System (R1CS), the proving key ($pk$), and the verification key ($vk$).
        \item \textbf{Formulation:}
        \begin{equation}\label{eq:setup}
            \begin{split}
                \text{R1CS}, pk, vk \leftarrow \textbf{Setup}&(\texttt{Circuit}, \text{ \texttt{ptau}}): \\
                 \text{R1CS} & \leftarrow \text{Compile}(\texttt{Circuit})\\
                 pk & \leftarrow \text{GenerateProvingKey}(\text{R1CS}, \text{\texttt{ptau}})\\
                 vk & \leftarrow \text{GenerateVerificationKey}(pk)
            \end{split}
        \end{equation}
    \end{itemize}
\end{enumerate}

Following the setup phase, the EA distributes the R1CS, $pk$ and $vk$ to both $\mathcal{P}$ and $\mathcal{V}$\footnote{In Groth16-based~\cite{groth2016size} zk-SNARK instantiations, it is secure to share the proving key with the verifier, though not strictly required for verification}. Additionally, it is assumed that each prover $\mathcal{P}$ is provisioned with a digital certificate $\text{Cert}_{VID}$ and a digital signature key pair $(sk_{sig}, vk_{sig})$. These credentials establishes the prover's identity. While these may be provisioned during manufacturing or registration, the specific Public Key Infrastructure (PKI) mechanism lies beyond the scope of this paper.

\begin{enumerate}
    \setcounter{enumi}{2} 

    \item \textbf{Operation: GenerateProof} \\
    Executed by the Prover, this operation uses the cryptographic parameters from the Setup phase and the input data $\mathcal{D}$ to generate the zk-SNARK proof, denoted by $\pi$.
    \begin{itemize}
        \item \textbf{Input:} The proving key ($pk$), binary file containing R1CS encoding, and the set of all inputs $\mathcal{D}$ (where $\mathcal{D} \leftarrow \mathcal{W} \cup \mathcal{X}$ such that $\mathcal{W}$ and $\mathcal{X}$ are private and public inputs respectively).
        \item \textbf{Output:} The cryptographic proof $\pi$ and the set of public inputs $\mathcal{X}$.
        \item \textbf{Formulation:}
        \begin{equation}\label{eq:generateproof}
            \begin{split}
                \mathbf{w}, (\pi, \mathcal{X}) \leftarrow \textbf{GenerateProof}&(pk, \text{ R1CS}, \mathcal{D}):\\
                \mathbf{w} & \leftarrow \text{GenerateWitness}(\text{R1CS}, \mathcal{D}) \\
                (\pi, \mathcal{X}) & \leftarrow \text{zkProof}(pk, \mathbf{w})
            \end{split}
        \end{equation}
    \end{itemize}
    
    \item \textbf{Operation: Sign} \\
    Executed by the Prover, this operation cryptographically signs the proof to bind it to the prover's identity.
    \begin{itemize}
        \item \textbf{Input:} The secret signing key $sk_{sig}$ and the message payload $m_{sig}$.
        \item \textbf{Output:} A digital signature $\sigma$.
        \item \textbf{Formulation:}
        \begin{equation}\label{eq:sign}
            \begin{split}
                \sigma \leftarrow \textbf{Sign}&(sk_{sig}, m_{\text{sig}}):\\
                 \sigma & \leftarrow \text{PerformSignature}(sk_{sig}, H(m_{sig}))
            \end{split}
        \end{equation}
    \end{itemize}
\end{enumerate}

Before broadcasting the proof $\pi$, $\mathcal{P}$ constructs the message payload $m_{sig}$ by concatenating the canonically encoded representations of the system parameters. This binds the proof to the specific circuit, verification key, and time.

\begin{equation}
    \begin{split}
        m_{sig} \leftarrow \Big( & \text{Enc}_{\texttt{CTX}}(\Delta_{Sign}) \parallel H(\text{Enc}_{\texttt{R1CS}}(\text{R1CS})) \parallel H(\text{Enc}_{\texttt{VK}}(vk)) \parallel H(\text{Enc}_{\texttt{CERT}}(\text{Cert}_{VID})) \\
        & \qquad \parallel H(\text{Enc}_{\texttt{PROOF}}(\pi)) \parallel \text{Enc}_{\texttt{COMMIT}}(c) \parallel \text{Enc}_{ \texttt{TS}}(T) \parallel \text{Enc}_{\texttt{NONCE}}(\nu) \Big)
    \end{split}
\end{equation}

where $\Delta_{Sign}$ is a domain separator tag, $c$ is the commitment output by the circuit (part of $\mathcal{X}$), $T$ is the current Unix epoch timestamp (in UTC), and $\nu$ is a random nonce for replay protection. Note that the hash function $H(\cdot)$ used in the signature scheme relies on standard cryptographic primitives (e.g., SHA-256), whereas the commitment $c$ is computed inside the circuit using an arithmetic-friendly hash function (e.g., Poseidon) to minimize proof generation overhead.

Let $\Pi_{VID}$ denote the proof package. Then, the AV generating the proof broadcasts the following proof package for the verifier: $$\Pi_{VID} = \{\pi, \mathcal{X}, c, \sigma, vk_{sig}, \text{Cert}_{VID}, H(\text{Enc}_{\texttt{R1CS}}(\text{R1CS}))\}.$$ 
Note that $m_{sig}$ is not transmitted as part of the broadcast payload. Instead, the verifier deterministically reconstructs $m_{sig}$ from the received fields and protocol context. $\text{Cert}_{VID}$ denotes the digital certificate binding the vehicle identity to its signing key $vk_{sig}$, issued by a certification authority. The design and operation of the PKI  are outside the scope of this work.

\begin{enumerate}
    \setcounter{enumi}{4}
    
    \item \textbf{Operation: VerifySign} \\
    Executed by the Verifier, this operation checks the authenticity of the message source.
    \begin{itemize}
        \item \textbf{Input:} The signature verification key $vk_{sig}$, the received signature $\sigma$, and the locally reconstructed payload $m_{sig}$.
        \item \textbf{Output:} A boolean status (\texttt{accept} or \texttt{reject}).
        \item \textbf{Formulation:}
        \begin{equation}\label{eq:verifysign}
            \begin{split}
            (\texttt{accept} / \texttt{reject}) \leftarrow \textbf{VerifySign}&(vk_{sig}, \sigma, m_{sig}):\\
                (\texttt{accept} / \texttt{reject}) \leftarrow & \text{VerifySignature}(vk_{sig}, \sigma, H(m_{sig}))
            \end{split}
        \end{equation}
    \end{itemize}

    \item \textbf{Operation: VerifyProof} \\
    Executed by the Verifier, this is the core validation step that checks the cryptographic correctness of the inference.
    \begin{itemize}
        \item \textbf{Input:} The zk-SNARK verification key $vk$, the proof $\pi$, and the public inputs $\mathcal{X}$.
        \item \textbf{Output:} A boolean status (\texttt{accept} or \texttt{reject}).
        \item \textbf{Formulation:}
        \begin{equation}\label{eq:verifyproof}
            \begin{split}
                (\texttt{accept} / \texttt{reject}) \leftarrow\textbf{VerifyProof}&(vk, \pi, \mathcal{X}):\\
                 (\texttt{accept} / \texttt{reject}) & \leftarrow \text{zkVerify}(vk, \pi, \mathcal{X})
            \end{split}
        \end{equation}
    \end{itemize}
\end{enumerate}

Upon the completion of the \textbf{VerifyProof} operation, the verification logic dictates the downstream behavior of the AV. If the output is \texttt{accept}, the AV treats the received information (e.g., the obstacle coordinates in $\mathcal{X}$) as validated truth and integrates it into its path-planning module. Conversely, if the output is \texttt{reject} --- whether due to an invalid signature or a failed zero-knowledge check—the AV discards the message entirely to prevent potential spoofing or erroneous data injection. This binary trust model ensures that only computationally verified perception data influences the vehicle's control decisions.

\subsection{Domain Separation and Context Binding}\label{sec:domainsep}

In a cooperative perception framework, an AV may generate proofs for multiple distinct logic gates (e.g., pedestrian detection, traffic light classification) using the same cryptographic identity. Without explicit context binding, a valid proof and signature intended for one context could theoretically be replayed and accepted in another if the public input structures are isomorphic.

To mitigate this \textit{cross-context ambiguity}, we enforce a strict domain separation schema. The framework employs a four-byte domain separator ($\Delta$) to uniquely identify the application scope and operation type. This value is prefixed to the message payload $m_{sig}$ before hashing, ensuring that the resulting signature $\sigma$ is mathematically bound to a specific context.

The $\Delta$ is constructed as follows:
\begin{itemize}
    \item \textbf{Byte 1 (App ID):} Identifies the specific AV application (e.g., \texttt{0x01} for Object Detection).
    \item \textbf{Byte 2 (Op ID):} Distinguishes between commitment generation (\texttt{0x01}) and signing operations (\texttt{0x02}).
    \item \textbf{Bytes 3-4 (Counter):} Reserved for versioning or indexing the total number of commitments associated with the application.
\end{itemize}

This structure ensures that a commitment generated inside the circuit cannot be confused with a signature generated outside of it, even if the underlying data is identical. The assignment of these values is performed by the EA during the design phase. As these are fixed parameters not consumed per proof, the identifier space is sufficient for practical scalability. (Specific domain separation values for the case studies discussed in this work are listed in Table~\ref{tab:flag} of Section~\ref{sec:sub:application}).

\section{Security Analysis}\label{sec:security-proof}

The proposed framework is evaluated against the threat model defined in Section~\ref{sec:sub:threatmodel}. This analysis first examines protocol-level defenses against active network attacks, followed by a formal cryptographic proof of the underlying Commitment Scheme (CS).

\subsection{Protocol Security Properties} \label{sec:sub:sec_prop}

Beyond the cryptographic primitives, the protocol workflow enforces operational security to mitigate active adversaries:

\begin{itemize}
    \item \textbf{Non-Repudiation}: The inclusion of the digital signature $\sigma$ in the proof package prevents the AV from denying authorship of a specific safety claim. Since the signing key $sk_{sig}$ is bound to the vehicle's hardware identity (via $\text{Cert}_{VID}$), any malicious message can be cryptographically traced back to the specific prover $\mathcal{P}$, satisfying the accountability requirement.
    \item \textbf{Replay Attack Protection}: To mitigate replay attack in which an adversary  captures a valid proof $\pi$ (e.g., "Brake applied") and rebroadcasts it at a later time, the framework enforces temporal freshness. The signature payload $m_{sig}$ includes a monotonically increasing Unix timestamp $T$ and a random nonce $\nu$. The Verifier $\mathcal{V}$ rejects any proof where $T$ is outside the acceptable latency window or where $\nu$ has been previously processed.\\
    We note that if the adversary controls the signing key (i.e., acts as a dishonest signer), timestamp-based freshness alone does not guarantee liveness, as such an adversary may precompute and sign messages using future timestamps. Preventing this class of attacks would require additional liveness guarantees, such as verifier-issued challenges, hardware-backed attestation or ZKP-based attestation~\cite{brandao2022zkasp}, which are outside the scope of this work.
    
    \item \textbf{Context Binding}: As detailed in Section \ref{sec:domainsep}, the use of the domain separator $\Delta$ ensures that proofs generated for one application logic (e.g., Object Detection) cannot be accepted by a Verifier expecting a different logic (e.g., Lane Keeping), even if the underlying zk-SNARK proof format is identical. Additionally, $H(R1CS)$, in the proof package $\Pi_{\mathit{VID}}$ provides explicit context binding between the proof $\pi$ and the intended computation, preventing its reuse for a different or a modified R1CS.
\end{itemize}

\subsection{Formal Cryptographic Analysis} \label{sec:sub:sec_formal}

The confidentiality and integrity of the perception data rely on the robustness of the underlying Commitment Scheme (CS) (formally defined in definition~\ref{def:commitment-scheme} in  Appendix~\ref{appendix:security-game}). The CS must satisfy two fundamental properties: \textit{Binding} (preventing the prover from changing values after commitment) and \textit{Hiding} (preventing the verifier from learning the values).

Below, we formally prove these bounds. We assume the Commitment Scheme consists of the tuple $\text{CS} = (\text{Commit}, \text{Open}, \text{Verify})$ and utilizes a hash function $H$ modeled as a Random Oracle.

\subsubsection{Binding Security}
\begin{theorem}\label{th:CS-bind}
    Let $\mathcal{A}$ be an adversary defined in Algorithm~\ref{alg:bind-game} (see Appendix~\ref{appendix:security-game}) that attacks the binding property of the commitment scheme. If $\mathcal{A}$ can output two distinct tuples $\tau = \{(m_1, s_{sec_1}), (m_2, s_{sec_2})\}$ such that $(m_1, s_{sec_1}) \neq (m_2, s_{sec_2})$ but $Commit(m_1, s_{sec_1}) = Commit(m_2, s_{sec_2})$, then we can construct a collision-finding adversary $\mathcal{B}_\mathcal{A}$ against the hash function $H$. The advantage is bounded by:
    \begin{equation}\label{eq:bind-bounds}
        \text{Adv}^{\text{BIND}}_{\text{CS}}(\mathcal{A}) \leq \text{Adv}^{\text{Coll}}_{H}(\mathcal{B}_\mathcal{A}).
    \end{equation}
\end{theorem}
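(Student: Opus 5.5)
The plan is a direct black-box reduction. Assume the commitment is computed as in the circuit, $\text{Commit}(m, s_{sec}) = H(\Delta_{Commit}\,;\,m\,;\,s_{sec})$, where $H$ is the arithmetic hash (Poseidon) applied to the field encoding of its inputs. Given any adversary $\mathcal{A}$ for the binding game of Algorithm~\ref{alg:bind-game}, I would build $\mathcal{B}_\mathcal{A}$ as follows:
\begin{enumerate}
\item $\mathcal{B}_\mathcal{A}$ runs $\mathcal{A}$ and forwards every oracle query of $\mathcal{A}$ to its own access to $H$, so $\mathcal{A}$'s view is distributed exactly as in the real binding game.
\item When $\mathcal{A}$ outputs $\tau = \{(m_1, s_{sec_1}), (m_2, s_{sec_2})\}$, $\mathcal{B}_\mathcal{A}$ forms the two hash preimages $x_i = (\Delta_{Commit}, m_i, s_{sec_i})$.
\item $\mathcal{B}_\mathcal{A}$ outputs the pair $(x_1, x_2)$ as its collision candidate.
\end{enumerate}

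The key step is to show that whenever $\mathcal{A}$ wins, $(x_1,x_2)$ is a genuine collision for $H$. By definition of winning, $\text{Commit}(m_1,s_{sec_1}) = \text{Commit}(m_2,s_{sec_2})$, and hence $H(x_1)=H(x_2)$. It remains to show $x_1 \neq x_2$. The map $(m,s_{sec}) \mapsto (\Delta_{Commit}, m, s_{sec})$ is injective, because it simply prepends a fixed domain separator. Therefore $(m_1,s_{sec_1}) \neq (m_2,s_{sec_2})$ implies $x_1 \neq x_2$.

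This gives the event inclusion $\{\mathcal{A} \text{ wins BIND}\} \subseteq \{\mathcal{B}_\mathcal{A} \text{ finds a collision}\}$. Taking probabilities yields $\text{Adv}^{\text{BIND}}_{\text{CS}}(\mathcal{A}) \leq \text{Adv}^{\text{Coll}}_{H}(\mathcal{B}_\mathcal{A})$. The running time of $\mathcal{B}_\mathcal{A}$ is that of $\mathcal{A}$ plus constant overhead. Optionally, in the ROM one can further bound the right-hand side by $q^2/(2p)$ via the birthday bound for $q$ oracle queries, but the theorem as stated does not require this.

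The main obstacle is the injectivity of the encoding, not the reduction itself. The message $m$ may be a vector of scaled values produced by $\mathcal{F}$ (Definition~\ref{def:scale}), and these must be fed into $H$ without ambiguity. This requires two things:
\begin{itemize}
\item a fixed arity, or a length prefix, in the sponge input, so that two different tuples cannot flatten to the same field-element sequence;
\item each component to be a canonical element of $\mathbb{F}_p$.
\end{itemize}
I would handle this by fixing the circuit's input layout (the EA hard-codes the arity in \texttt{Circuit}) and arguing that equality of the tuples is checked as equality of field-element vectors. Under that convention the encoding is injective.

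A subtle point remains: two distinct real-valued messages could quantize to the same field element. This is not a binding break, however, because the binding game is defined over field elements. I would state this explicitly so that the reduction stays tight.
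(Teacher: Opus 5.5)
Your proposal is correct and matches the paper's proof: the paper's $\mathcal{B}_\mathcal{A}$ also runs $\mathcal{A}$, recomputes $H(m_1; s_{sec_1})$ and $H(m_2; s_{sec_2})$, and outputs the pair when they collide, and its two-case split (identical pairs give no binding win, distinct pairs give a collision) is exactly your event inclusion. Your extra care about the prepended $\Delta_{Commit}$ and the injectivity of the field-element encoding is a harmless refinement that the paper leaves implicit, since it formally defines the scheme simply as $Commit(m, s_{sec}) = H(m; s_{sec})$.
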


\begin{proof}
    We construct an algorithm $\mathcal{B}_\mathcal{A}$ (Algorithm~\ref{alg:cr-bind}) that uses $\mathcal{A}$ to find a collision in $H$.
    
    \begin{algorithm}[!h]
        \caption{Reduction Adversary $\mathcal{B}_\mathcal{A}$}
        \label{alg:cr-bind}
        \begin{algorithmic}[1]
            \State $\tau \leftarrow \phi$
            \State \textbf{Run} $\mathcal{A}$ to obtain collision tuples:
            \State \hspace{0.5cm} $\{(m_1, s_{sec_1}), (m_2, s_{sec_2})\} \xleftarrow{\$} \mathcal{A}$
            \State \textbf{Compute Commitments:}
            \State \hspace{0.5cm} $c_1 \leftarrow H(m_1 \, ; \, s_{sec_1})$, $c_2 \leftarrow H(m_2 \, ; \, s_{sec_2})$
            \State \textbf{Check Collision:}
            \State \hspace{0.5cm} \textbf{if} $(c_1 = c_2) \wedge ((m_1, s_{sec_1}) \neq (m_2, s_{sec_2}))$ \textbf{then}
            \State \hspace{1.0cm} \textbf{return} $\tau = \{(m_1, s_{sec_1}), (m_2, s_{sec_2})\}$
            \State \hspace{0.5cm} \textbf{else}
            \State \hspace{1.0cm} \textbf{return} $\perp$
        \end{algorithmic}
    \end{algorithm}

    \noindent \textbf{Case 1: $(m_1, s_{sec_1}) = (m_2, s_{sec_2})$.}
    Since $H$ is deterministic, identical inputs yield identical outputs. This is not a collision. $\text{Adv}^{\text{BIND}}_{\text{CS}}(\mathcal{A}) = 0$.
    
    \noindent \textbf{Case 2: $(m_1, s_{sec_1}) \neq (m_2, s_{sec_2})$.}
    If $\mathcal{A}$ succeeds in finding valid openings for the same commitment $c$, it implies $H(m_1 \, ; \, s_{sec_1}) = H(m_2 \, ; \, s_{sec_2})$. Thus, any success by $\mathcal{A}$ translates directly to a collision found by $\mathcal{B}_\mathcal{A}$. Therefore, $\text{Adv}^{\text{BIND}}_{\text{CS}}(\mathcal{A}) = \text{Adv}^{\text{Coll}}_{H}(\mathcal{B}_\mathcal{A})$.

     Hence, from \textbf{Case 1} and \textbf{Case 2}, we conclude, $\text{Adv}^{\text{BIND}}_{\text{CS}}(\mathcal{A})$ $\leq \text{Adv}^{\text{Coll}}_{H}(\mathcal{B}_\mathcal{A})$.
\end{proof}

\subsubsection{Hiding Security}

\begin{theorem}\label{th:CS-hide}
    Let $\mathcal{A}$ be an adversary defined in Algorithm~\ref{alg:hide-game} (see Appendix~\ref{appendix:security-game}) that attempts to distinguish between the commitments of two messages $m_0$ and $m_1$. In the Random Oracle Model (ROM), the advantage of $\mathcal{A}$ is negligible:
    \begin{equation}\label{eq:hide-bounds}
        \text{Adv}^{\text{HIDE}}_{\text{CS}}(\mathcal{A}) = 0.
    \end{equation}
\end{theorem}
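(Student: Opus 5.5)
The plan is a game-hopping argument in the Random Oracle Model. The commitment is $c = H(m \,;\, s_{sec})$ with $s_{sec} \xleftarrow{\$} \mathbb{F}_p$ sampled freshly by the challenger, as in Definition~\ref{def:commitment-scheme}. We model $H$ as a lazily sampled random function. Then $c$ is a fresh uniform value, independent of $m$, unless the adversary itself evaluates $H$ at the input $(m_b \,;\, s_{sec})$.

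Let $\mathcal{A}$ make at most $q$ oracle queries. The games are as follows.

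\textbf{Game $G_0$} is the hiding experiment of Algorithm~\ref{alg:hide-game}:
\begin{itemize}
\item $\mathcal{A}$ outputs $m_0, m_1$;
\item the challenger picks $b \xleftarrow{\$} \{0,1\}$ and $s_{sec} \xleftarrow{\$} \mathbb{F}_p$;
\item $\mathcal{A}$ receives $c_b = H(m_b \,;\, s_{sec})$ and outputs a guess $b'$.
\end{itemize}

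\textbf{Game $G_1$} is identical, except that the challenger returns a uniformly random $c^\ast$ from the range of $H$, drawn independently of $b$, $m_0$, $m_1$ and $s_{sec}$. The oracle is left unprogrammed at $(m_b \,;\, s_{sec})$.

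In $G_1$ the view of $\mathcal{A}$ is information-theoretically independent of $b$, so $\Pr[b'=b]=1/2$ exactly. This gives an advantage of exactly $0$ in $G_1$.

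The games $G_0$ and $G_1$ are identical-until-bad. The bad event $\mathsf{BAD}$ is that $\mathcal{A}$ queries $H$ on $(m_0 \,;\, s_{sec})$ or $(m_1 \,;\, s_{sec})$, either before or after receiving the challenge. Outside $\mathsf{BAD}$, the random-oracle value at the challenge point is never observed except through $c$, so it is distributed exactly like $c^\ast$. By the fundamental lemma of game playing,
\[
\text{Adv}^{\text{HIDE}}_{\text{CS}}(\mathcal{A}) \le \Pr[\mathsf{BAD}].
\]

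To bound $\Pr[\mathsf{BAD}]$, we use that $s_{sec}$ is uniform in $\mathbb{F}_p$ and that, in $G_1$, $\mathcal{A}$'s view is independent of $s_{sec}$. Hence each query hits one of the two bad points with probability at most $2/p$, and $\Pr[\mathsf{BAD}] \le 2q/p$. Here $p$ is roughly $2^{254}$ for the BN254 field used by Circom and Poseidon, so this is negligible.

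The main obstacle is reconciling this with the statement's claim of exactly $0$. A query-making adversary can always guess $s_{sec}$ with probability about $q/p$, so a literal bound of $0$ is not achievable. I would therefore do two things:
\begin{itemize}
\item present the bound $\text{Adv}^{\text{HIDE}}_{\text{CS}}(\mathcal{A}) \le 2q/p$, which is the ``negligible'' claim in the prose of the theorem;
\item state explicitly that the advantage is exactly $0$ conditioned on $\neg\mathsf{BAD}$, that is, for an adversary that never queries the oracle at the challenge point. This is the sense in which equation~\eqref{eq:hide-bounds} holds.
\end{itemize}

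A secondary point to check is the bad-event analysis in the Circom instantiation. Poseidon is only heuristically modelled as a random oracle. Also, $(m \,;\, s_{sec})$ must be parsed injectively, so that the challenge point is well defined and distinct for $m_0 \neq m_1$.
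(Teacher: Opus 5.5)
Your argument is correct for the hidden-salt game, and it takes a different route from the paper. The paper does no game hopping. It writes $\Pr[b=b']$ as $\tfrac12\Pr[1\xleftarrow{\$}\mathcal{A}(\dots,H(m_1\parallel s_{sec_1}))]+\tfrac12\Pr[0\xleftarrow{\$}\mathcal{A}(\dots,H(m_0\parallel s_{sec_0}))]$ and marginalizes each branch over the oracle value $y$ using $\Pr[c=y]=2^{-n}$. It then uses $\Pr[1\xleftarrow{\$}\mathcal{A}(\dots,y)]+\Pr[0\xleftarrow{\$}\mathcal{A}(\dots,y)]=1$ to obtain exactly $0$.

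That cancellation treats $\mathcal{A}$'s response to a given $y$ as the same function in both branches. In other words, it assumes $\mathcal{A}$ learns nothing about $b$ except through $c$, and it ignores an adversary that queries $H$ at the challenge point. Your identical-until-bad hop isolates that assumption as $\mathsf{BAD}$. The paper's calculation is in effect your $G_1$, and you pay for the difference with the additive $\Pr[\mathsf{BAD}]$.

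This is the right treatment of a ROM adversary, and you are right not to claim literal $0$. Querying $H(m_0;s)$ on guessed salts and comparing with $c$ already gives a strictly positive advantage of order $q/|\mathcal{K}|$. So the paper's route yields the clean value $0$ only for adversaries with no useful oracle access. Yours yields a valid negligible bound for every $q$-query adversary.

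Two corrections to your write-up.
\begin{enumerate}
\item Your $G_0$ is not literally Algorithm~\ref{alg:hide-game}. There, $\mathcal{A}$ itself outputs $(m_0,s_{sec_0}),(m_1,s_{sec_1})$ and receives them back along with $c$. One query $H(m_0;s_{sec_0})$ then determines $b$, and the advantage is essentially $\tfrac12$. Your bound, like the paper's remark that $\mathcal{A}$ ``does not possess'' $s_{sec}$, applies only after changing the game: the challenger samples $s_{sec}\xleftarrow{\$}\mathcal{K}$ (as in Definition~\ref{def:commitment-scheme}) and never reveals it. State this change explicitly instead of presenting $G_0$ as the paper's experiment.
\item A query $(m;s)$ can hit a bad point only if $s=s_{sec}$, so the per-query probability is $1/|\mathcal{K}|$. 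The case studies use a 128-bit blinding factor, so the bound is $\Pr[\mathsf{BAD}]\le q/2^{128}$, not $2q/p$ with $p\approx 2^{254}$.
\end{enumerate}
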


\begin{proof}
    The security relies on the properties of the Random Oracle $H$. For any message $M \in \mathcal{M}$, the probability of $H(M)$ outputting a specific value $y$ is uniformly distributed as $1/2^n$, where $n$ is the bit-length of the hash output. Crucially, because the adversary $\mathcal{A}$ does not possess the private randomness $s_{sec}$ used to salt the commitment, the hash output $c = H(m \parallel s_{sec})$ is statistically independent of the message $m$ from $\mathcal{A}$'s perspective.
    
    Formalizing this, we calculate the adversary's advantage defined as the probability of correctly guessing the bit $b$ (indicating which message was committed) minus the probability of a random guess ($\frac{1}{2}$). 
    
    From Algorithm~\ref{alg:hide-game}, we derive:
    \begin{equation}
        \begin{split}
             \text{Adv}^{\text{HIDE}}_{\text{CS}}(\mathcal{A}) &= \left| \Pr[\mathbf{G}_{\text{CS}}^{\text{HIDE}}(\mathcal{A}) \mapsto 1] - \frac{1}{2} \right| \\
             &= \left| \Pr[b = b'] - \frac{1}{2} \right| \\
             &= \Pr[b=1] \cdot \Pr[1 \xleftarrow{\$} \mathcal{A}(\tau_0, \tau_1, state, H(m_1 \parallel s_{sec_1}))] \\
             &\qquad + \Pr[b=0] \cdot \Pr[0 \xleftarrow{\$} \mathcal{A}(\tau_0, \tau_1, state, H(m_0 \parallel s_{sec_0}))] - \frac{1}{2}
        \end{split}
    \end{equation}
    
    where $\tau_i = (m_i, s_{sec_i})$. Since $H$ acts as a random oracle, the distribution of commitments $c$ is uniform over $\{0, 1\}^n$ regardless of the input message. We can marginalize over all possible oracle outputs $y$:
    
    \begin{equation}
        \begin{split}
             &= \frac{1}{2} \sum_{y \in \{0, 1\}^n} \Pr[c=y] \cdot \Pr[1 \xleftarrow{\$} \mathcal{A}(\dots, y)] \\
             &\qquad + \frac{1}{2} \sum_{y \in \{0, 1\}^n} \Pr[c=y] \cdot \Pr[0 \xleftarrow{\$} \mathcal{A}(\dots, y)] - \frac{1}{2}
        \end{split}
    \end{equation}
    
    Substituting the uniform probability $\Pr[c=y] = \frac{1}{2^n}$:
    
    \begin{equation}
        \begin{split}
             &= \frac{1}{2} \cdot \frac{1}{2^n} \sum_{y \in \{0, 1\}^n} \left( \Pr[1 \xleftarrow{\$} \mathcal{A}(\dots, y)] + \Pr[0 \xleftarrow{\$} \mathcal{A}(\dots, y)] \right) - \frac{1}{2}
        \end{split}
    \end{equation}
    
    Since $\mathcal{A}$ must output either 0 or 1, the sum of probabilities inside the parenthesis is exactly 1.
    
    \begin{equation}
        \begin{split}
             &= \frac{1}{2} \cdot \frac{1}{2^n} \cdot \sum_{y \in \{0, 1\}^n} (1) - \frac{1}{2} \\
             &= \frac{1}{2} \cdot \frac{1}{2^n} \cdot (2^n) - \frac{1}{2} \\
             &= \frac{1}{2} - \frac{1}{2} = 0
        \end{split}
    \end{equation}
    
    Thus, $\mathcal{A}$ gains no information about the committed message.
\end{proof}

\section{Applications to Autonomous Vehicles}\label{sec:sub:application}

Having established the abstract protocol and security properties in Section~\ref{sec:methodology}, we now demonstrate the practical utility of the framework through two concrete instantiations. These case studies illustrate how generic circuit construction can be adapted to different objectives -- safety verification and model auditing -- while preserving core privacy guaranties.

For clarity, we adopt standard autonomous driving terminology: the vehicle generating the proof is referred to as the \textit{ego} vehicle.
\begin{itemize}
    \item Case Study I (Section~\ref{sec:sub:perception-integrity}): Focuses on \textbf{Perception Integrity}. The ego vehicle proves it is maintaining a safe distance from a stop sign according to the RSS safety model~\cite{mobileye,vassilev2026assessment}, while simultaneously broadcasting the sign's location to trailing vehicles.
    \item Case Study II (Section~\ref{sec:sub:semantic}): Focuses on \textbf{Semantic Interoperability}. The ego vehicle proves the performance (Precision/Recall) of its perception stack on a standardized challenge set without revealing its proprietary model outputs.
\end{itemize}

To prevent cross-context replay attacks (as defined in Section~\ref{sec:domainsep}), each instantiation is assigned a unique Domain Separator ($\Delta$). These 4-byte tags are prefixed to the cryptographic commitments and signatures to bind them to their specific application logic. Table~\ref{tab:flag} details the specific values used for these case studies, formatted as [App ID] [Op ID] [Counter].

\begin{table}[!h]
    \centering
    \caption{Domain Separation Tags ($\Delta$) for Case Studies. Values are presented in Hexadecimal to illustrate the byte-level structure defined in Section~\ref{sec:domainsep}.}
    \label{tab:flag}
    \begin{tabular}{l c c}
    \toprule
    \textbf{AV Application} & \textbf{Commitment Tag} $(\Delta_{commit})$ & \textbf{Signature Tag} $(\Delta_{Sig})$ \\
    \midrule
    \textbf{Structure} & \texttt{0x[App][01][0000]} & \texttt{0x[App][02][0000]} \\
    \midrule
    \textbf{Case I:} Perception Integrity & \texttt{0x00010000} & \texttt{0x00020000} \\
    \small{(App ID: \texttt{0x00})} & \small{(Decimal: 65,536)} & \small{(Decimal: 131,072)} \\
    \midrule\textbf{Case II:} Model Audit & \texttt{0x01010000} & \texttt{0x01020000} \\
    \small{(App ID: \texttt{0x01})} & \small{(Decimal: 16,842,752)} & \small{(Decimal: 16,908,288)} \\
    \bottomrule
    \end{tabular}
\end{table}

\subsection{Case Study I: Proof of Perception Integrity}\label{sec:sub:perception-integrity}

To demonstrate the framework in a safety-critical context, we consider the case of \textit{cooperative perception} to check safe stopping distances. In this scenario, an ego vehicle generates a cryptographic proof attesting that it is maintaining a safe distance from a detected object (e.g., a stop sign) without revealing its exact speed, location, or raw sensor data.

This instantiation maps the abstract framework to a concrete problem: enabling an observer (such as a trailing vehicle or road infrastructure) to verify that the ego vehicle's perception system has correctly identified a safety constraint and is adhering to it.




\subsubsection{Safety Formulation: RSS Distance to Stop Sign} \label{sec:sub:safe-distance}
We adopt the longitudinal safe-distance formulation of the Responsibility-Sensitive Safety (RSS) model~\cite{mobileye}. RSS defines the minimum distance $d_{min}$ that an ego vehicle must maintain from a leading object to guarantee a collision-free stop, accounting for reaction times and braking capabilities.


The general RSS safe-distance expression is given by:

\begin{equation}\label{eq:mobileye-safe-distance}
    d_{min} = \left[ v_r t_{rec} + \frac{\alpha_{max} t_{rec}^2}{2} + \frac{(v_r + t_{rec} \alpha_{max})^2}{2\beta_{min}} - \frac{v_f^2}{2\beta_{max}} \right],
\end{equation}

where, $v_r$ and $v_f$ are the speeds of the rear (\textit{ego}) and front vehicles respectively, $t_{rec}$ denotes the response time of the rear (\textit{ego}) vehicle, $\alpha_{max}$ is the maximum acceleration during the reaction interval, and $\beta_{min}$, $\beta_{max}$ denote minimum and maximum braking capabilities. 

For this case study, we model the stop sign as a stationary object ($v_f = 0$). We assume the ego vehicle does not accelerate towards the stop sign ($\alpha_{max} = 0$). Let $v_r = v$ and the braking capacity $\beta_{min} = \mu g$, where $\mu \approx 0.75$ (friction coefficient for asphalt) and $g=9.81 \text{ m/s}^2$.

Under these assumptions, Equation~\eqref{eq:mobileye-safe-distance} reduces to the required safe distance $d_S$:

    \[ d_S = \left[ vt_{rec} + \frac{v^2}{2\mu g} \right] \]

For example, an \textit{ego} vehicle traveling at 30 mph ($\approx 13.41$ m/s) with a reaction time $t_{rec} = 1$ s, requires a reaction distance of $13.41$ m and a braking distance of $12.22$ m. Therefore, the vehicle must maintain a total safe buffer of $25.6$ m to ensure a safe stop.


\subsubsection{Framework Instantiation}

\textbf{Problem Statement:} The \textit{ego} vehicle (Prover $\mathcal{P}$) must broadcast a proof $\pi$ attesting that its current distance from the stop sign ($d^C_{\mathcal{S}}$) is greater than the calculated safe stopping distance ($d_S$). Mapping this problem to the framework defined in Section~\ref{sec:methodology}, the Enforcement Authority (EA) defines the logic $\mathcal{L}$ (Algorithm~\ref{alg:perceptionintegrity}) and compiles it into the \texttt{Circuit} (Algorithm~\ref{alg:zksnark-safetyquadrant}). The data is partitioned as follows:

\begin{enumerate}
    \item \textbf{Public Data ($\mathcal{X}$):} Inputs required for context and verification, including the stop sign location (for trailing vehicles) and the claimed safety status (\texttt{SAFE}).
    \item \textbf{Private Data ($\mathcal{W}$):} Sensitive internal states, including the vehicle's exact speed $v$, precise GPS location $(\varphi_{V}, \lambda_{V})$, and internal model probability $Pr$.
\end{enumerate}

\begin{algorithm}[!ht]
\caption{Safety Predicate $\mathcal{L}$}\label{alg:perceptionintegrity}
\begin{algorithmic}
        \If{$[Pr \geq \theta_{\mathcal{S}}] \wedge [d^{C}_{\mathcal{S}} \geq d_{\mathcal{S}}]$} \Comment{$d^{C}_{\mathcal{S}}$ denotes current distance of \textit{ego} vehicle from $\mathcal{S}$ and $d_S$ is the safe distance}
            \State \textbf{Return} $1$  \Comment{\texttt{SAFE}: \textit{Ego} vehicle is operating at a safe distance}
        \Else
            \State \textbf{Return} $0$ \Comment{\texttt{UNSAFE}: \textit{Ego} vehicle is not operating within a safe distance}
        \EndIf 
    \end{algorithmic}
\end{algorithm}

Table~\ref{tab:perceptionintegrity} details the parameters. Note that all real-valued inputs (GPS, probability) are mapped to the finite field $\mathbb{F}_p$ using the scaling function defined in Definition~\ref{def:scale}.

\begin{table}[!ht]
\centering
\caption{Circuit Parameters for Safe Distance Instantiation. All real-valued parameters are scaled to $\mathbb{F}_p$.}
\label{tab:perceptionintegrity}
\begin{tabular}{l p{9.5cm}}
    \toprule
    \textbf{Parameter} & \textbf{Description} \\
    \midrule
    \multicolumn{2}{c}{\textit{\textbf{Public Inputs (Instance $\mathcal{X}$): Known to Verifier}}} \\
    \midrule
    ID & Object Class ID (e.g., 11 for Stop Sign) \\
    $d_{\mathcal{S}}$ & Required safe stopping distance (calculated via RSS) \\
    $d^{C}_{\mathcal{S}}$ & Current distance from stop sign \\
    $\varphi_{\mathcal{S}}, \lambda_{\mathcal{S}}$ & Latitude and Longitude of the stop sign \\
    $\rho_{prob}, \rho_{geo}, \rho_{\psi}$ & Scaling factors for probability, geo-coordinates, and yaw \\
    $T, \nu$ & Unix timestamp and random nonce (for freshness) \\
    $c$ & Cryptographic commitment hash of private inputs \\
    \texttt{SAFE} & Boolean flag: 1 (Safe) or 0 (Unsafe) \\
    $w_{cloud/prep/fog}$ & Weather parameters (Cloud, Precipitation, Fog density) \\
    
    \midrule
    \multicolumn{2}{c}{\textit{\textbf{Private Inputs (Witness $\mathcal{W}$): Known only to Prover}}} \\
    \midrule
    $Pr$ & Detection probability score ($Pr \in [0, 1]$) \\
    $\mathbf{b}$ & Bounding box coordinates $[x_1, y_1, x_2, y_2]$ \\
    $\varphi_{V}, \lambda_{V}$ & Latitude and Longitude of the Ego vehicle \\
    $v, \psi$ & Speed and Yaw of Ego vehicle \\
    $s_{sec}$ & 128-bit secret blinding factor for commitment \\
    \bottomrule
\end{tabular}
\end{table}

\subsubsection{Circuit Construction}

Algorithm~\ref{alg:zksnark-safetyquadrant} details the \textbf{BuildCircuit} operation. Unlike a standard software function that executes on specific inputs at runtime, this operation defines the static \textbf{Rank-1 Constraint System (R1CS)} structure during the setup phase. It establishes the immutable arithmetic gates that constrain the relationship between the private witness (sensor data) and the public output (\texttt{SAFE}). 

This circuit enforces two critical properties:
\begin{enumerate}
    \item \textbf{Computational Correctness:} It encodes the RSS formula as arithmetic constraints, ensuring that a valid proof can only be generated if the distance threshold logic is satisfied.
    \item \textbf{Input-Output Binding:} It cryptographically binds the public output signal \texttt{SAFE} to the private inputs. This prevents a malicious prover from evaluating the safety logic as "Unsafe" internally but broadcasting ``Safe" publicly, as the resulting proof would fail verification against the R1CS.
\end{enumerate}

\begin{algorithm}[!ht]
  \caption{\textbf{Circuit Definition: Safety Predicate $\mathcal{L}$}} 
  \label{alg:zksnark-safetyquadrant}
  \textbf{Operation:} \textsc{BuildCircuit}$(\mathcal{L})$ \\
  \textbf{Description:} Defines the arithmetic constraints and signal wiring. This operation is executed once by the EA to generate the R1CS.
  
  \begin{algorithmic}[1]
    \State \textbf{Signals Declaration:}
    \State \quad $\text{signal input private } \{Pr, \mathbf{b}, \varphi_V, \lambda_V, v, \psi, s_{sec}\}$ \Comment{Witness $\mathcal{W}$}
    \State \quad $\text{signal input public } \{d_{\mathcal{S}}, d^C_{\mathcal{S}}, \text{ID}, \Delta_{commit}, \dots \}$ (refer Table~\ref{tab:perceptionintegrity}) \Comment{Instance $\mathcal{X}$ includes Domain Tag}
    \State \quad $\text{signal output public } \{\texttt{SAFE}\}$ \Comment{Result}

    \Statex \hrulefill
    
    \State \textbf{Constraint 1: Object Class Verification}
    \State \quad $\texttt{check\_id} \leftarrow \text{IsEqual}(\text{ID}, \text{ID}_{\mathcal{S}})$ 
    \State \quad $\text{Assert}(\texttt{check\_id} === 1)$ \Comment{Proof is only valid for Stop Signs}

    \State \textbf{Constraint 2: Threshold Logic (The Safety Gate)}
    \State \quad $\texttt{is\_detected} \leftarrow \text{GreaterThanOrEqual}(Pr, \theta_{\mathcal{S}})$
    \State \quad $\texttt{is\_distant} \leftarrow \text{GreaterThanOrEqual}(d^C_{\mathcal{S}}, d_{\mathcal{S}})$
    
    \State \quad \textit{// Logic: If detected, must be distant. If not detected, effectively safe.}
    \State \quad $\texttt{safety\_distance\_condition} \leftarrow \text{OR}(\text{NOT}(\texttt{is\_detected}), \texttt{is\_distant})$
    
    \State \quad $\texttt{SAFE} \ \text{<==} \ \texttt{safety\_distance\_condition}$ \Comment{Constrain Output Signal}

    \State \textbf{Constraint 3: Context Binding (Commitment)}
    \State \quad $\text{Assert}(\Delta_{commit} === \texttt{0x00010000})$ \Comment{Enforce correct App Context}
    \State \quad \textit{// Calculate commitment binding the context to the private predictions}
    \State \quad $c \leftarrow \text{PoseidonHash}(\Delta_{commit};Pr; \mathbf{b}; \varphi_V; \lambda_V; v; \psi; T; \nu; s_{sec})$ \Comment{Enforce Binding and Hiding}
    
  \end{algorithmic}
\end{algorithm}

\subsubsection{Usage Scenarios}
The instantiation defined in Algorithm~\ref{alg:zksnark-safetyquadrant} enables several practical capabilities for V2X systems:

\begin{figure}[!ht]
\caption{Operational Scenario: The ego vehicle (leading) detects a stop sign on a winding road. It broadcasts a zk-SNARK proof attesting to the stop sign's location and the vehicle's safe stopping distance. The trailing vehicle, whose view is occluded by vegetation, verifies this proof to update its local map and velocity planning.}
    \label{fig:perception-integrity-stop-sign}
    \centering
    \includegraphics[width=0.75\linewidth]{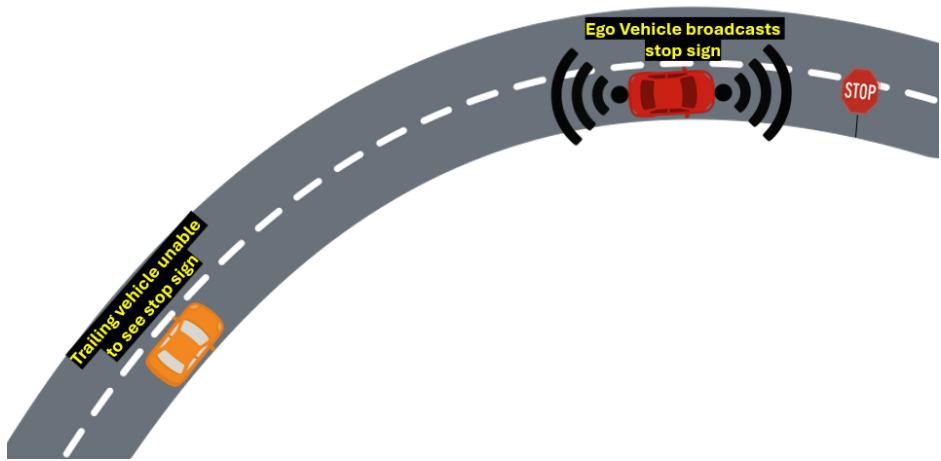}
\end{figure}

\begin{itemize}
    \item \textbf{Preemptive Situational Awareness:} As illustrated in figure~\ref{fig:perception-integrity-stop-sign}, a trailing vehicle (occluded by a curve or obstacle), can receive verified metadata about a stop sign ahead. Because the proof guarantees the integrity of the the data without revealing the leading vehicle's raw sensor stream, the trailing vehicle can confidently trust this ``\textit{Over-the-Horizon}" perception to adjust its own planning. 
    \item \textbf{Regulatory Compliance:} The \textit{ego} vehicle can mathematically prove to law enforcement or digital road infrastructure that its perception system is meeting the mandated performance bar (e.g., $Pr \geq 75\text{ }\%$) without submitting the proprietary model for inspection.  
    \item \textbf{Liability Auditing:} In post-accident forensics, the \textit{ego} vehicle can prove it was operating within the safe braking distance Equation~\ref{eq:mobileye-safe-distance}) at the time of the incident, effectively proving "absence of fault" regarding the longitudinal distance, while keeping the exact braking curve private.
    \item \textbf{Trustworthy Semantic Broadcast:} This mechanism transforms V2X from a naive data echange into a verified semantic broadcast. Consuming vehicles do not need to blindly trust the sender; they verify the zk-SNARK proof to ensure the semantic label (Stop Sign) and the safety state (Safe Distance) were generated with valid R1CS circuit. 
\end{itemize}

\subsection{Case Study II: Semantic Interoperability \& Model Auditing}\label{sec:sub:semantic}

While the previous case study focused on a specific safety rule (safe distance), this case study addresses the broader challenge of Semantic Interoperability and Performance Auditing.

In cooperative perception, a vehicle's situational awareness often depends on data received from others. To trust this external data, the receiving entity (or a regulator) requires assurance that the sender's perception stack performs reliably. However, proving model performance usually requires sharing the model's outputs on a test set, which risks leaking proprietary intellectual property.

We propose using the \textbf{\VASI} framework to solve this dilemma. The Enforcement Authority (EA) issues a ``Challenge Set" of images. The ego vehicle processes these images and generates a Zero-Knowledge Proof (ZKP) attesting that its model $M$ achieves specific Precision and Recall targets, without revealing the actual bounding boxes or class probabilities.


\subsubsection{Formal Verification of Object Detection}\label{sec:sub:detect_min_objects}

\textbf{Problem Statement:} The \textit{ego} vehicle (Prover) must prove that its perception model $M$ detects objects in a challenge dataset $\mathcal{I}$ with sufficient accuracy. Specifically, for each image, the model must satisfy a minimum Precision threshold ($\tau_{prec}$) and Recall threshold ($\tau_{rec}$). Additionally, the model must successfully detect all objects flagged as \textit{Safety Critical} (e.g., pedestrians, red lights).



Let $\mathcal{I} = \{I_1, \dots, I_N\}$ be the set of challenge images. For each image $I_i$, let the Ground Truth set be $\mathcal{G}_i = \{g_1, \dots, g_{K_i}\}$, , where $K_i$ is the number of objects in image $i$. Each ground truth object $g_k$ is defined by a bounding box $\mathbf{b}_k$ and a class label $id_k$. A subset of these objects may be marked as safety-critical, denoted by the indicator $c_k \in \{0, 1\}$.

Let the Model Detections for image $I_i$ be $\mathcal{D}_i = \{d_1, \dots, d_{M_i}\}$. Each detection $d_j$ consists of a predicted box $\mathbf{b}'_j$, class $id'_j$, and confidence score $Pr'_j$.

To verify performance, we employ the standard Greedy IoU Matching protocol used in benchmarks like COCO~\cite{coco}. A detection $d_j$ matches a ground truth $g_k$ if and only if:

\begin{enumerate}
    \item Class Match: $id'_j = id_k$
    \item Confidence Check: $Pr'_j \geq \theta_{conf}$ (Confidence Threshold)
    \item Overlap Check: $\text{IoU}(\mathbf{b}'_j, \mathbf{b}_k) \geq \theta_{IoU}$ (Intersection-over-Union Threshold)
\end{enumerate}

We define the set of True Positives (TP) as the count of unique greedy matches between $\mathcal{D}_i$ and $\mathcal{G}_i$. The system is considered Valid only if the following conditions hold for the dataset:

\begin{equation}
    \text{Precision: } \frac{|\text{TP}|}{|\mathcal{D}_i|} \geq \tau_{prec} \quad \wedge \quad \text{Recall: } \frac{|\text{TP}|}{|\mathcal{G}_i|} \geq \tau_{rec}
\end{equation}

Additionally, for the subset of safety-critical ground truths (where $c_k = 1$), the recall must be $100\text{ }\%$.

\subsubsection{Framework Instantiation}
Table~\ref{tab:semanticinteroperability} details the circuit parameters. The Challenge Images are handled via cryptographic commitment (hash) to ensure the vehicle is testing against the correct dataset without needing to embed the full images into the circuit.

\begin{table}[!ht]
    \centering
    \caption{Circuit Parameters for Model Performance Auditing.}\label{tab:semanticinteroperability}
    \begin{tabular}{l p{9.5cm}}
        \toprule
        \textbf{Parameter} & \textbf{Description} \\
        \midrule
        \multicolumn{2}{c}{\textit{\textbf{Public Inputs (Instance $\mathcal{X}$): Known to Verifier}}} \\
        \midrule
        $H(\mathcal{I})$ & Hash of the Challenge Dataset (ensures correct test set used) \\
        $\mathcal{G}$ & Ground Truth labels and boxes for the challenge set \\
        $\theta_{conf}, \theta_{IoU}$ & Minimum confidence and IoU thresholds (e.g., 0.5) \\
        $\tau_{prec}, \tau_{rec}$ & Required Precision and Recall pass marks \\
        $\rho_{prob}, \rho_{bbox}$ & Scaling factors for probabilities and coordinates \\
        $T, \nu$ & Unix timestamp and random nonce \\
        $c$ & Commitment to the private predictions \\
        \texttt{PASS} & Boolean flag: 1 (Model Passed) or 0 (Model Failed) \\
        \midrule
        \multicolumn{2}{c}{\textit{\textbf{Private Inputs (Witness $\mathcal{W}$): Known only to Prover}}} \\
        \midrule
        $\mathcal{D}$ & Model Predictions (List of $[id', Pr', x_1', y_1', x_2', y_2']$) \\
        $s_{sec}$ & 128-bit secret blinding factor \\
        VID & Vehicle Hardware ID (bound to the signature) \\
        \bottomrule
    \end{tabular}
\end{table}

\subsubsection{Circuit Logic}

Algorithm~\ref{alg:zksnark-semantic} outlines the logic for the \textbf{BuildCircuit} operation. It implements a deterministic matching logic to count True Positives. Note that to minimize circuit depth, the sorting of predictions (required for greedy matching) is performed off-circuit by the prover; the circuit only verifies that the provided sorting is correct and that the matching logic is valid.

\begin{algorithm}[!ht]
  \caption{\textbf{Circuit Definition: Model Auditing Logic}}
  \label{alg:zksnark-semantic}
  \textbf{Operation:} \textsc{BuildCircuit}$(\mathcal{L})$ \\
  \textbf{Description:} Defines constraints to verify precision and recall against a fixed ground truth $\mathcal{G}$.
  
  \begin{algorithmic}[1]
    \State \textbf{Signals Declaration:}
    \State \quad $\text{signal input private } \{\mathcal{D}, s_{sec}\}$ \Comment{Witness: Sorted Predictions}
    \State \quad $\text{signal input public } \{H(\mathcal{I}), \Delta_{commit}, \mathcal{G}, \tau_{prec}, \tau_{rec}, \dots\}$ (refer Table~\ref{tab:semanticinteroperability}) 
    \State \quad $\text{signal output public } \{\texttt{PASS}\}$

    \Statex \hrulefill

    \State \textbf{Step 1: Verify Dataset Integrity}
    \State \quad $\text{Assert}(H(\text{CircuitConstant}(\mathcal{I})) == \text{PublicInput}(H(\mathcal{I})))$

    \State \textbf{Step 2: Metric Verification Loop}
    \State \quad $\texttt{accumulation\_pass} \leftarrow 1$
    
    \For{$i$ $=$ $1$ to $N$} \Comment{Unrolled loop for fixed challenge set size}
        \State $TP \leftarrow 0$ 
        \State $TP_{crit} \leftarrow 0$
        
        \State \textit{// Optimization: Predictions $\mathcal{D}_i$ are supplied pre-sorted by confidence.}
        \State \textit{// Circuit strictly verifies the sort order and IoU matches.}
        
        \For{$j$ $=$ $1$ to $M_{max}$} \Comment{Loop up to max capacity}
            \State \textit{// [Greedy Matching Logic]}
            \State \textit{// If $d_{i,j}$ matches an unmatched $g_k$, increment counters}
             \If{$\text{IsMatch}(\mathcal{D}_{i,j}, \mathcal{G}_i)$}
                 \State $TP \leftarrow TP + 1$
                 \State $TP_{crit} \leftarrow TP_{crit} + c_{matched}$
             \EndIf
        \EndFor
        
        \State \Comment{Check Thresholds for Image $i$}
        \State $\texttt{pass}_i \leftarrow (TP \geq \tau_{prec} \cdot M_i) \land (TP \geq \tau_{rec} \cdot K_i)$
        \State $\texttt{crit\_pass}_i \leftarrow (TP_{crit} == \text{TotalCritical}_i)$
        
        \State $\texttt{accumulation\_pass} \leftarrow \texttt{accumulation\_pass} \land (\texttt{pass}_i \land \texttt{crit\_pass}_i)$
    \EndFor

    \State \textbf{Step 3: Enforce Public Outcome}
    \State \quad $\texttt{PASS} \ \text{<==} \ \texttt{accumulation\_pass}$ 

    \State \textbf{Step 4: Context Binding (Commitment)}
    \State \quad \textit{// Verify the Domain Tag matches the Model Audit Application ID (0x01)}
    \State \quad $\text{Assert}(\Delta_{commit} === \texttt{0x01010000})$  \Comment{Enforce correct App Context}
    \State \quad \textit{// Calculate commitment binding the context to the private predictions}
    \State \quad $c \leftarrow \text{PoseidonHash}(\Delta_{commit}; N; \mathcal{D}; T; \nu; s_{sec})$ \Comment{Enforce Binding and Hiding}
    
  \end{algorithmic}
\end{algorithm}

\subsubsection{Usage Scenarios}
The instantiation defined in Algorithm~\ref{alg:zksnark-semantic} supports several critical functions in the autonomous ecosystem:

\begin{itemize}
    \item \textbf{Regulatory Benchmarking:} An enforcement authority (e.g., NHTSA) can periodically issue challenge sets to vehicles. The vehicles can cryptographically prove they meet the current safety standards (e.g., ``Must detect 99 \text{ }\% of pedestrians") without needing to upload their proprietary software to a government server.
    \item \textbf{Marketplace Trust:} In a V2X data marketplace, a buyer can request a ``Proof of Quality" before purchasing semantic data from another vehicle. The seller proves their model is robust on a public benchmark, establishing trust in the data quality.
    \item \textbf{Privacy-Preserving Edge Case Collaboration:} When vehicles encounter rare or confusing scenarios (edge cases), they can prove to the fleet network that their model failed to detect an object (Recall failure) without revealing the specific incorrect bounding box, allowing the fleet to identify model weaknesses while maintaining privacy.
\end{itemize}

\section{Experiments and Performance Analysis} \label{sec:experiments}



In this section, we present the implementation of the proposed zk-SNARK framework for the applications presented in section~\ref{sec:sub:perception-integrity} and section~\ref{sec:sub:semantic}. We present the experiment and performance analysis on three different platforms that provide implementations of Groth16-based zero-knowledge proofs. SnarkJS~\cite{snarkjs} is written in JavaScript and web assembly. SnarkJS serves as a reference implementation that reflects the performance of a general-purpose, CPU-only Groth16 prover/verifier. Including this baseline allows us to quantify the performance gains obtained through low-level optimizations and GPU acceleration, which is critical for assessing feasibility in time-constrained AV settings. Rapidsnark~\cite{rapidsnark} is written in C++ and contains assembly level optimizations for Intel-based chips. To take full advantage of rapidsnark, it is required that dependencies for rapidsnark must also be compiled from source with chipset optimizations. Rapidsnark-GPU~\cite{rapidsnark-gpu} is GPU-based implementation of rapidsnark. In ideal conditions, it can achieve more than $4-10 X$ performance gain compared with its CPU-based counterpart. In our experiments we found that, due to the overhead introduced by memory operations, the performance gain is up to $2-4$ times only. 
We profile three resource intensive tasks: witness generation (see GenerateWitness from equation~\eqref{eq:generateproof}),  proof generation (see zkProof from equation~\eqref{eq:generateproof}) and proof verification (see zkVerify from equation~\eqref{eq:verifyproof}) in milliseconds (ms). All the performance analysis was done on an Intel Core Ultra 9 processor with 64 GB RAM and NVIDIA GeForce RTX 4070 with 8 GB memory. Rapidsnark was compiled and run using Clang compiler, while rapidsnark-gpu was complied and run using CUDA Toolkit (version $12.4$)\footnote{\url{https://developer.nvidia.com/cuda-12-4-0-download-archive}}. All implementations were executed 100 times, and the reported results correspond to the average over these runs.

\begin{figure}[!ht]
\caption{Performance of SnarkJS, RapidSnark and RapidSnark-GPU for proving perception integrity (refer section~\ref{sec:sub:perception-integrity})}
    \label{fig:exp:perception-integrity}
    \centering
    \includegraphics[width=0.75\linewidth]{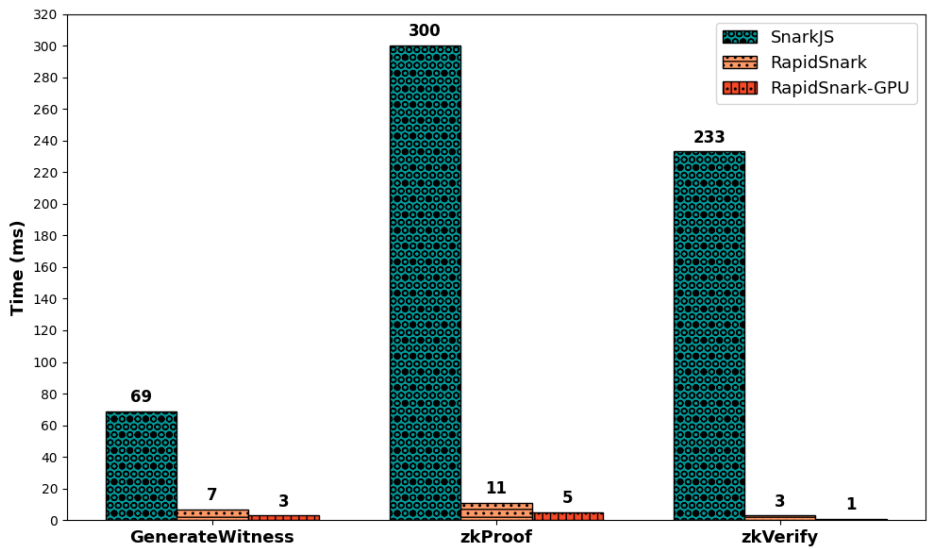}
\end{figure}

\begin{figure}[!ht]
\caption{Comparison of percentage of time taken by SnarkJS, RapidSnark and RapidSnark-GPU for each of the sub-routines.}
    \label{fig:exp:perception-integrity:subroutine}
    \centering
    \includegraphics[width=0.75\linewidth]{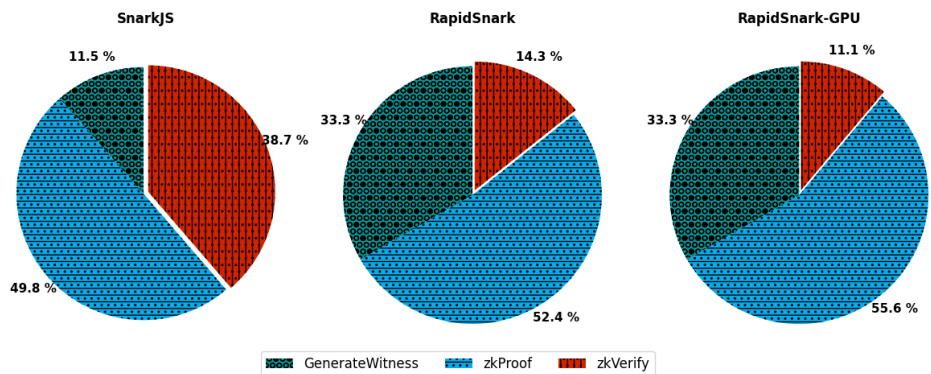}
\end{figure}
  
Figure~\ref{fig:exp:perception-integrity} shows the performance comparison for the proof of perception integrity (refer section~\ref{sec:sub:perception-integrity}) on three different platforms. There witness generation times are $69$ ms (SnarkJS), $7$ ms (rapidsnark) and $3$ ms (rapidsnark-gpu). This corresponds to an approximate $10 X$ speedup when moving from SnarkJS to the C++-based rapidsnark implementation, and an additional $2 X$ improvement when leveraging GPU acceleration. The observed gain is expected, since, SnarkJS is designed primarily for browser and JavaScript environments, whereas rapidsnark is implemented in optimized C++ with direct access to lower-level cryptographic primitives and hardware acceleration.

In the case of proof generation, SnarkJS takes $300$ ms while rapidsnark and rapidsnark-gpu takes $11$ ms and $5$ ms, respectively. In the case of proof generation, we see over $25 X$ performance gain between SnarkJS and rapidsnark. This is due to the fact that proof generation requires several cryptographic operations and hence, hardware acceleration using C++ outperforms by far JavaScript-based implementations. There is just over $2 X$ performance gain between CPU- and GPU-based implementations of rapidsnark. The disparity is most significant in proof verification. Rapidsnark verifies proofs in approximately $3$ ms, compared to $233$ ms for SnarkJS, yielding over a $75 X$ performance improvement. Rapidsnark-gpu reduces verification time further to $1$ ms, achieving an additional $3 X$ improvement over the CPU implementation. Although SnarkJS leverages WebAssembly for performance optimization, the experiments demonstrate that native C++ implementations with hardware support substantially outperform JavaScript-based approaches, particularly for elliptic-curve-heavy computations.

In figure~\ref{fig:exp:perception-integrity:subroutine}, the pie-chart shows the percentage of time taken by each of the sub-routines for the respective implementations. In case of SnarkJS, $49.8\text{ }\%$ is taken by witness generation, $38.7\text{ }\%$ by proof verification and $11.5\text{ }\%$ by proof verification. For rapidsnark, $33.3\text{ }\%$ is taken by witness generation, $52.4\text{ }\%$ is taken by proof generation and $14.3\text{ }\%$ is taken by proof verification. Rapidsnark-gpu sees a similar trend with $33.3 \text{ }\%$, $55.6\text{ }\%$ and $11.1\text{ }\%$ for witness generation, proof generation and proof verification, respectively.

Overall, the experimental results demonstrate that optimized C++ implementations, particularly when combined with GPU acceleration, provide substantial performance advantages for real-time perception integrity proofs in AVs.

\begin{figure}[!ht]
\caption{Performance of SnarkJS, RapidSnark and RapidSnark-GPU for model performance proof (refer section~\ref{sec:sub:semantic})}
    \label{fig:exp:semantic}
    \centering
    \includegraphics[width=0.75\linewidth]{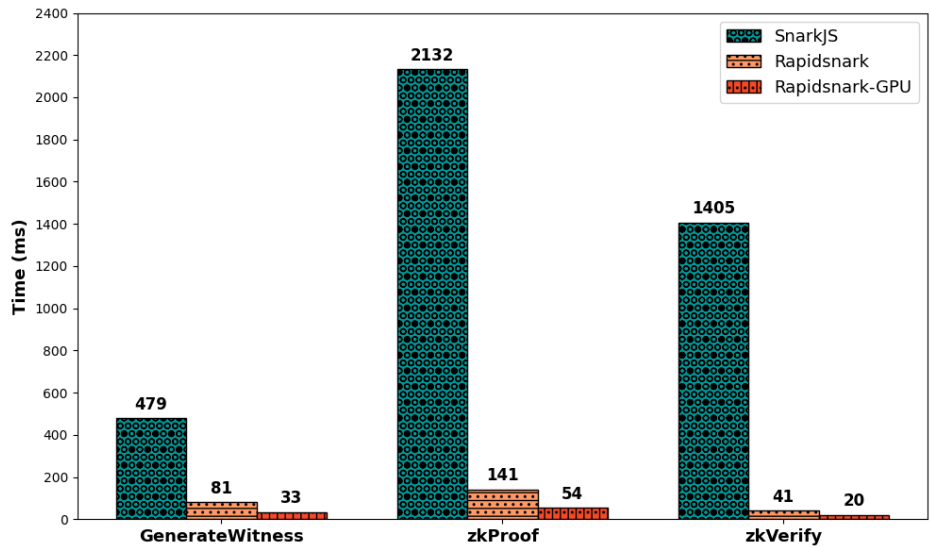}
\end{figure}

\begin{figure}[!ht]
\caption{Comparison of percentage of time taken by SnarkJS, RapidSnark and RapidSnark-GPU for each of the sub-routines.}
    \label{fig:exp:semantic:subroutine}
    \centering
    \includegraphics[width=0.75\linewidth]{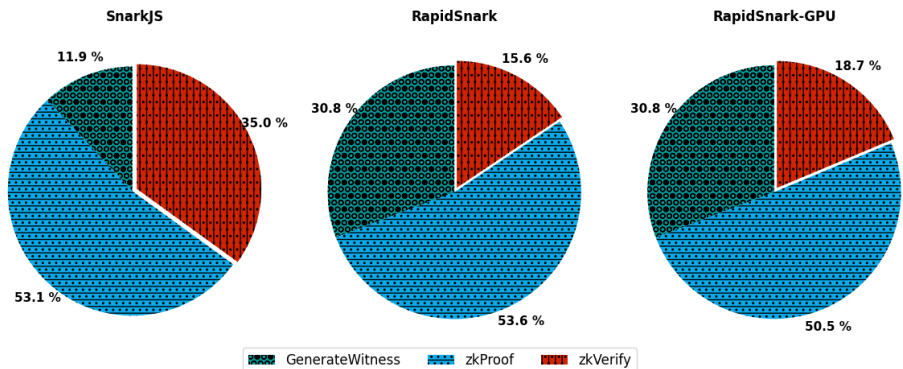}
\end{figure}

For demonstrating the performance analysis of the second application (refer section~\ref{sec:sub:semantic}), we utilize $5$ images and a total of $20$ ground truths. There were $5$ \textit{safety critical} objects and $20$ ground truths. The detection input provided by the model had $16$ detections. Out this $16$ detections $15$ mapped correctly with the ground truth  while $1$ was a phantom detection. The model also correctly detected $4$ out of $5$ \textit{safety critical} objects. Since, $100\text{ }\%$ of the detection for \textit{safety critical} objects is required for passing the model performance test, hence, the input for the detection provided by the vehicle resulted in failure or \texttt{Flag} set to $0$.
Figure~\ref{fig:exp:semantic} shows the comparative performance of SnarkJS, rapidsnark and rapidsnark-GPU for witness generation, proof generation and proof verification for the model performance evaluation or semantic understanding evaluation (refer section~\ref{sec:sub:semantic}). SnarkJS took $479$ ms, $2132$ ms and $1405$ ms for witness generation, proof generation and proof verification respectively. In contrast, rapidsnark took $81$ ms, $141$ ms and $41$ ms for the three operations while rapidsnark-GPU took $33$ ms, $54$ ms and $20$ ms respectively. 
Figure~\ref{fig:exp:semantic:subroutine} shows the percentage of time taken by each of the respective operations.  

\section{Limitations}
\label{sec:limitation}
In this section, we discuss the limitations of the proposed framework. 
\begin{itemize}
    \item Trust in EA: The framework assumes EA to correctly execute any required setup procedures and to securely handle any trapdoor information generated in the process. This assumption introduces a centralized trust dependency, which may raise concerns in large-scale or multi-stakeholder deployments. More broadly, in collaborative autonomous-vehicle settings involving multiple manufacturers (e.g., \textbf{\VASI}), participants may prefer not to rely on a single coordinating entity.
    While this work focuses on a trusted EA model, the framework is not inherently tied to this assumption.
    For example, proof systems such as PLONK, Halo2, and STARK-based constructions reduce or eliminate dependence on circuit-specific trusted setup through universal or transparent setup mechanisms. In such settings, the enforcing authority is responsible solely for declarative constraint specification, without being trusted for application-specific parameter generation.
    Alternatively, trust in the setup phase can be distributed across multiple independent stakeholders using multi-party computation (MPC) or multi-ceremony powers-of-tau constructions, thereby mitigating unilateral control and strengthening the overall trust model.  
    \item Complex Mathematical Operations: The framework works in a Galois field and hence, it may not be feasible to implement some of the mathematical operations, for example: division. Such constraints on the computation can potentially limit the applicability of the framework and may need workarounds or tricks like scaling to design the corresponding circuit.
    \item Proof Generation Time: The zero-knowledge proof is generated on the basis of the constraints imposed by EA. The proof generation time is proportional to the number of constraints; Hence, the framework limits the ability of EA to impose the constraints freely, especially for real-time applications.
    \item Proof of Execution (PoX): The perception stack of the \textit{ego} vehicle can act maliciously and provide favorable inputs to generate the zk-SNARK proof. Preventing such adversarial behavior with strong cryptographic guarantees requires a proof of execution. We discuss this further in section~\ref{sec:sub:pox}.
\end{itemize}

\section{Future Work}
\label{sec:future-work}
\subsection{Proof of Execution (PoX)}
\label{sec:sub:pox}
Hermes' Seal guarantees proof of computation on the outputs produced by the perception stack of the \textit{ego} vehicle. This design strategy allows efficient, privacy-preserving validation of key outputs without revealing internal model details or any other proprietary information. However, it assumes the integrity of the inputs to the proof system (e.g., model predictions in $\mathcal{I}$) and does not cryptographically enforce that the claimed model was actually executed on real sensor data. Hence, as a natural extension, future work can explore Proof of Execution (in short PoX) that can attest that both the model and inputs used to generate the proof are authentic and unmodified. PoX requires integration of  
trusted execution environments, thereby enhancing the overall trustworthiness of the perception pipeline without altering its modular structure. OP-TEE~\cite{op-tee} is one of the most popular trusted execution environment that provides isolation from general purpose linux kernel by utilizing ARM TrustZone~\cite{arm-trustzone}.

\subsection{Richer Constraints}
The framework can be extended to include complete semantic understanding, depth estimation, scene parsing, temporal proofs etc. Extension to these tasks require development and integration metrics that can unwind internal architecture of the models deployed on the perception stack. It is also worth noting that as computationally intensive task are integrated into the framework, the direction of decentralized infrastructure may also be explored.

\subsection{Integration with Leaderboards}
\label{sec:sub:leaderboard}
An exciting potential direction for future work is to integrate the zk-SNARK framework with the benchmark leaderboards, for example: KITTI~\cite{kitti-benchmark}, Waymo~\cite{waymo-dataset}, nuScenes~\cite{nuscenes-dataset}. This will enable verifiable submission of the results uploaded for evaluation by the models owners. By generating zero-knowledge proofs over model outputs, participants could prove compliance with evaluation metrics like accuracy, IoU etc., without revealing their model architecture, weights, or proprietary post processing algorithms. This promotes privacy preserving and fair evaluation process, and reduces the risk of tampering, leading to secure and trustworthy competition in sensitive or commercial domains. Such an integration has a potential to serve as a basis for certified model performance in regulated environments.

In section~\ref{sec:sub:semantic}, we introduced a metric called \textit{safety critical score}, designed to assess the performance of an autonomous vehicle’s perception stack with respect to objects of high criticality. This notion of criticality is modeled as a multivariate function that identifies objects of interest based on diverse environmental and adversarial contexts. For example, while a pedestrian may not always be considered critical, while a partially occluded pedestrian jaywalking near the vehicle represents a safety critical scenario. Despite its practical importance in the domain of autonomous vehicles, limited attention has been given to developing leaderboards that incorporate such metrics to evaluate the robustness of perception systems under distributional shifts. 

\section{Conclusion}
\label{sec:conclusion}
In this work we proposed Hermes' Seal: a modular, deployment ready framework that leverages zk-SNARKs to provide zero-knowledge proofs over perception outputs in autonomous systems thereby enabling \textbf{\VASI}. The verifiable yet privacy preserving assertions about object detections and semantic constraints enhances trust in AI-based decision making without exposing internal model logic or raw sensor data. Its model agnostic design allows seamless integration into existing perception stacks with minimal architectural changes, making it suitable for real-world deployment across heterogeneous platforms. This foundation opens the door to a range of future enhancements, including proof of execution and secure leaderboard submissions paving the way toward a verifiable and interoperable AI ecosystem for autonomous mobility.


\bibliographystyle{unsrt}  
\bibliography{references}  

\appendix
\section{Groth16 zk\textnormal{-}SNARK: Mathematical Overview}
\label{app:groth16}

\subsection{R1CS and QAP from Arithmetic Circuit}
Let $\mathbb{F}$ denote a finite field of prime order $p$, i.e.\ $\mathbb{F} = \mathbb{F}_p$.  Let the given computation be represented as an arithmetic circuit denoted by \texttt{Circuit}. 
All arithmetic in the \texttt{Circuit} and the proof system is performed over $\mathbb{F}$. 
\\
Let $n$ be the number of constraints in the \texttt{Circuit} and $\mathbf{w} \in \mathbb{F}^{m+1}$ be the solution vector of \texttt{Circuit}, .i.e.,  $\mathbf{w} = (w_0, w_1, \dots w_l, w_{l+1}, \dots w_m)^{\top}$, where $w_0= 1$\footnote{The first element $w_0$ is fixed to $1$ so that constant values can be included in the \texttt{Circuit}. This allows each constraint to express terms that do not depend on any input or intermediate variables.}. Then Rank-1 Constraint System (R1CS) is given by the following equation: 
\begin{equation}
\label{eq:appendix:r1cs}
    \forall i \in \{1, \dots, n\},\quad \big\langle \mathbf{a}_i,\mathbf{w}\big\rangle \cdot \big\langle \mathbf{b}_i,\mathbf{w}\big\rangle
    \;=\;
    \big\langle \mathbf{c}_i,\mathbf{w}\big\rangle,
\end{equation}
where $\mathbf{a}_i, \mathbf{b}_i, \mathbf{c}_i \in \mathbb{F}^{m+1}$ are row vectors that contains each of the constraint $i \in n$, and $\langle \cdot,\cdot\rangle$ denotes the standard inner product over the field $\mathbb{F}$ and each of the $i^{th}$ row defines exactly one multiplication gate in \texttt{Circuit}. Further, $\big\langle\mathbf{a}_i, \mathbf{w}\big\rangle = \sum_{j=0}^m A_{i,j}w_j$, is the left hand linear combination for the constraint $i$. Similarly, $\big\langle\mathbf{b}_i, \mathbf{w}\big\rangle = \sum_{j=0}^m B_{i,j}w_j$ is the right hand linear combination for the constraint $i$, while $\big\langle\mathbf{c}_i, \mathbf{w}\big\rangle = \sum_{j=0}^m C_{i,j}w_j$ is the output for the constraint $i$.
\\
Let $A,B,C\in\mathbb{F}^{n\times (m+1)}$ be the matrices whose $i^{th}$ rows are $\mathbf{a}_i,\mathbf{b}_i,\mathbf{c}_i$ respectively .i.e., 
\begin{equation}
    A = \begin{bmatrix}
        \mathbf{a}_1 \\
        \mathbf{a}_2 \\
        \vdots \\
        \mathbf{a}_n 
    \end{bmatrix},
    \qquad
    B = \begin{bmatrix}
        \mathbf{b}_1 \\
        \mathbf{b}_2 \\
        \vdots \\
        \mathbf{b}_n 
    \end{bmatrix},
    \qquad
    C = \begin{bmatrix}
        \mathbf{c}_1 \\
        \mathbf{c}_2 \\
        \vdots \\
        \mathbf{c}_n 
    \end{bmatrix}
\end{equation}
Then the R1CS is given by $A\mathbf{w} \circ B\mathbf{w} = C\mathbf{w}$, where $\circ$ denotes Hadamard product (element wise multiplication).
\\
To enable polynomial based cryptographic proofs, which are efficient, we convert the R1CS system to QAP (Quadratic Arithmaetic Polynomial). Langrage polynomial is used to convert R1CS to QAP.  
\\
Let $r_1, \dots, r_n \in \mathbb{F}$, then a Langrange basis for a polynomial of degree $\leq n$ is the set of polynomials $\{L_1(x), \dots L_n(x)\}$, where the values can be given using Kronecker delta .ie.., $L_i(r_j) = \delta_{ij}$, $\forall (i, j) \in n$. The degree of each of the polynomials in the set $\{L_1(x), \dots L_n(x)\}$ is $(n-1)$. Define three polynomials $A(x)$, $B(x)$ and $C(x)$ as follows:
\begin{equation}
\label{eq:lagrangepoly}
A(x) = \sum^{m}_{j=0} w_j\cdot A_j(x), \qquad B(x) = \sum^{m}_{j=0} w_j\cdot B_j(x), \qquad C(x) = \sum^{m}_{j=0} w_j\cdot C_j(x).
\end{equation}
where, $\forall$ $j \in \{0, m\}$, $A_j = \sum_{i=1}^n A_{i,j}L_i(x)$, $B_j = \sum_{i=1}^n B_{i,j}L_i(x)$, and $C_j = \sum_{i=1}^n C_{i,j}L_i(x)$. The polynomials of equation~\eqref{eq:lagrangepoly} can only satisfy the R1CS if $\forall i \in n$, $A(r_i)B(r_i) = C(r_i)$. $A(x)$, $B(x)$ and $C(x)$ must be evaluated for all the $n$ constraints. Hence, to convert evaluation of all the polynomials to a single divisibility (a.k.a succinct), define:
\begin{equation}
    \label{eq:succinct}
    A(x)B(x) - C(x) = H(x)t(x)
\end{equation}
where, $H(x) \in \mathbb{F}[m]$, while $t(x)$ is vanishing polynomial defined as follows:
\begin{equation}
\label{eq:vanish}
t(x)\;=\;\prod_{i=1}^n (x-r_i).
\end{equation}
Equation~\eqref{eq:succinct} provides the property of `succinct' to zk-SNARKs.

\subsection{Trusted Setup}
\label{app:groth16:srs}
Let $(\alpha, \beta, \gamma, \delta, \tau) \leftarrow \mathbb{F}^{5}$, be the secret trapdoor (toxic waste) selected by the Enforcement Authority EA. Then the
Structured Reference String (SRS) is defined as follows: $\{g^1_1, g^{\tau^1}_1, \dots, g^{\tau^d}_1\}$ and $\{g^1_2, g^{\tau^1}_2, \dots, g^{\tau^d}_2\}$, where $d < n$. Then the proving key $(pk)$ and verification key $(vk)$ is given by:
\begin{equation}
    \label{eq:pkvk}
    \begin{split}
        pk &= \Big(
        g_1,\, g_2,\, 
        \alpha G_1,\, \beta G_1,\, \beta G_2,\, 
        \delta G_1,\, \delta G_2, \\
        &\qquad \qquad \{ g_1^{A_j(\tau)} \}_{j=0}^{m},\;
         \{ g_2^{B_j(\tau)} \}_{j=0}^{m},\;
         \{ g_1^{C_j(\tau)} \}_{j=l+1}^{m},\\
        & \qquad \qquad \qquad \{ g_1^{\frac{\tau^{k} t(\tau)}{\delta}} \}_{k=0}^{n-2},\;
         \{ g_1^{\frac{L_j(\tau)}{\delta}} \}_{j=l+1}^{m}
    \Big) \\
    vk &= \Big(g_1,\, g_2,\, \alpha G_1,\, \beta G_2,\, \gamma G_2,\, \delta G_2,\, g_1^{\frac{L_i(\tau)}{\gamma}},\, \eta=( \alpha G_1, \beta G_2) \Big)
    \end{split}
\end{equation}
where, $\{ g_1^{A_j(\tau)} \}_{j=0}^{m} = \{\prod_{i=1}^{n} (g_1^{L_i(\tau)})^{A_{i,j}} | j = 0, \dots, m\}$, $\{ g_2^{B_j(\tau)} \}_{j=0}^{m} = \{\prod_{i=1}^{n} (g_2^{L_i(\tau)})^{B_{i,j}} | j = 0, \dots, m\}$ and $\{ g_1^{C_j(\tau)} \}_{j=l+1}^{m} = \{\prod_{i=1}^{n} (g_1^{L_i(\tau)})^{C_{i,j}} | j = (l+1), \dots, m\}$.

\subsection{zk-SNARK Proof}
The prover $\mathcal{P}$ samples two random field elements $(r, s) \xleftarrow{\$} \mathbb{F}$
to ensure zero-knowledge, and generates the proof components
$A \in \mathbb{G}_1$, $B \in \mathbb{G}_2$, and $C \in \mathbb{G}_1$ as follows:
\begin{equation}
    \begin{split}
        A &= (\alpha + A(\tau) + r\delta) \in \mathbb{G}_1 \\
        B &= (\beta + B(\tau) + s\delta) \in \mathbb{G}_2 \\
        C &= (C(\tau) + H(\tau)t(\tau) + rB(\tau) + sA(\tau) - rs\delta) \in \mathbb{G}_1. 
    \end{split}
\end{equation}
The zk-SNARK proof is given by: $\pi = (A, B, C)$, where, $A(\tau) = \prod_{j=0}^m (g_1^{A_j(\tau)})^{w_j}$, $B(\tau) = \prod_{j=0}^m (g_1^{B_j(\tau)})^{w_j}$ when computing under $\mathbb{G}_1$ and $B(\tau) = \prod_{j=0}^m (g_2^{B_j(\tau)})^{w_j}$ when computing under $\mathbb{G}_2$, $C(\tau) = \prod_{j=0}^m (g_1^{C_j(\tau)})^{w_j}$ and $H(\tau)t(\tau) = \sum_{k=0}^{n-2}h_k\tau^kt(\tau)$ since the degree of $H(x)$ is $(n-2)$ .i.e., $H(x) = \sum_{k=0}^{n-2}h_kx^k$ and $t(x)$ is the vanishing polynomial. Hence, $H(\tau)t(\tau)$ can be computed as follows: $\prod_{k=0}^{(n-2)}(g_1^{\frac{\tau^kt(\tau)}{\delta}})^{h_k}$.

\section{Security Game}
\label{appendix:security-game}

\begin{algorithm}[!h]
\caption{\textbf{:} Game Collision Resistance (CR): $\mathbf{G}_{H}^{\text{Coll}}(\mathcal{A})$}
\label{alg:cr-game}
    \begin{algorithmic}[1]
        \State $\{X_1, X_2\}\xleftarrow{\$} \mathcal{A}$ \Comment{$\{X_1, X_2\} \in X$}
        \State $Y_1 \leftarrow H(X_1)$ \Comment{$Y_1\in Y$}
        \State $Y_2 \leftarrow H(X_2)$ \Comment{$Y_2\in Y$}
        \If{$Y_{1} = Y_{2} $} \Comment{Condition for Collision}
            \State \textbf{return} $True$
        \Else
            \State \textbf{return} $False$
        \EndIf
\end{algorithmic}
\end{algorithm}

\begin{algorithm}[!h]
\caption{\textbf{:} Game BIND: $\mathbf{G}_{\text{CS}}^{\text{BIND}}(\mathcal{A})$, where CS=$(Commit, Open, Verify)$}
\label{alg:bind-game}
    \begin{algorithmic}[1]
        \State $\{(m_1, s_{sec_1}), (m_2, s_{sec_2})\}\xleftarrow{\$} \mathcal{A}$
        \State $c_1 \leftarrow Commit(m_1; s_{sec_1})$
        \State $c_2 \leftarrow Commit(m_2; s_{sec_2})$
        \If{$(m_1, s_{sec_1}) = (m_2, s_{sec_2})$ or $c_1 \neq c_2$}
            \State \textbf{return} $False$
        \Else
            \State \textbf{return} $True$
        \EndIf
\end{algorithmic}
\end{algorithm}

\begin{algorithm}[!h]
\caption{\textbf{:} Game HIDE: $\mathbf{G}_{\text{CS}}^{\text{HIDE}}(\mathcal{A})$, where CS=$(Commit, Open, Verify)$}
\label{alg:hide-game}
    \begin{algorithmic}[1]
            \State $\{(m_0, s_{sec_0}), (m_1, s_{sec_1}), state\}\xleftarrow{\$} \mathcal{A}$
            \State $b \xleftarrow{\$} \{0, 1\}$
            \State $c \leftarrow$ $Commit(m_b, s_{sec_b})$, where $c\leftarrow H(m_b; s_{sec_b})$ and $|H(m_b; s_{sec_b})| = n$
            \State $b' \xleftarrow{\$} \mathcal{A}((m_0, s_{sec_0}), (m_1, s_{sec_1}), state, c)$
            \State \textbf{return} $(b == b')$
\end{algorithmic}
\end{algorithm}

\subsection{Definitions and Security Notions}
\label{sec:definitions}
This section presents definitions and security notations that are used to map the parameters of AV domain to $\mathbb{F}_p$ and provide security analysis.  



\begin{definition}
\label{def:cr}
    \textbf{Collision Resistance:} Let $H: X \rightarrow Y$ be a hash function with $X$ as domain space and $Y$ as range space, where $X$ is the set of tuples of inputs and $Y$ is the set of outputs. For an adversary $\mathcal{A}$, we define the advantage of breaking the collision resistance of the function $H$ using Algorithm~\ref{alg:cr-game}, in the following way:
\end{definition}

\begin{equation}
    \text{Adv}_H^{\text{Coll}}(\mathcal{A})  = Pr[\mathbf{G}_{H}^{\text{Coll}}(\mathcal{A}) \mapsto 1].
\end{equation}

\begin{definition}
\label{def:commitment-scheme}
\textbf{Commitment Scheme:} Let $m \in \mathcal{M}$ and $s_{sec} \xleftarrow{\$} \mathcal{K}$, then a commitment scheme $CS$ is defined over three algorithms .i.e., $CS = (Commit, Open, Verify)$ as follows:
\begin{equation}
\begin{split}
    Commit(m, s_{sec}) &= H(m; s_{sec}) = c\\
    Open(c) &= (m, s_{sec}) \\
    Verify(c, m, s_{sec}) &= 
    \begin{cases}
        \top & \text{if } c = H(m; s_{sec})\\
        \perp & \text{otherwise}
    \end{cases}
\end{split}
\end{equation}
\end{definition}

\begin{definition}
    \label{def:bind}
\textbf{Binding Security of CS: } Let $CS$ be the commitment scheme defined in definition~\ref{def:commitment-scheme}. Then the binding security of the commitment scheme is given by algorithm~\ref{alg:bind-game}. For an adversary $\mathcal{A}$, the bind advantage on CS is given by the following equation:
\end{definition}
\begin{equation}
    \label{eq:cs-bind}
    \text{Adv}_{\text{CS}}^{\text{BIND}} (\mathcal{A}) = Pr[\mathbf{G}_{\text{CS}}^{\text{BIND}}(\mathcal{A}) \mapsto 1].
\end{equation}

\begin{definition}
    \label{def:hide}
\textbf{Hiding Security of CS: } Let $CS$ be  the commitment scheme defined in definition~\ref{def:commitment-scheme}. Then the hiding (indifferentiability) security of the commitment scheme CS is given by algorithm~\ref{alg:hide-game}. For an adversary $\mathcal{A}$, the hide advantage on CS is given by following equation:
\end{definition}
\begin{equation}
    \label{eq:cs-hide}
    \text{Adv}_{\text{CS}}^{\text{HIDE}} (\mathcal{A}) = Pr[\mathbf{G}_{\text{CS}}^{\text{HIDE}}(\mathcal{A}) \mapsto 1] - \frac{1}{2}.
\end{equation}


\end{document}